\newcommand{\R}{\mathbb{R}}
\newcommand{\bs}{\boldsymbol}
\newtheorem{theorem}{Theorem}[section]
\newtheorem{corollary}[theorem]{Corollary}
\newtheorem{lemma}[theorem]{Lemma}
\newtheorem{remark}[theorem]{Remark}
\theoremstyle{definition}
\newtheorem{definition}[theorem]{Definition}
\DeclareMathOperator{\ER}{ER}
\DeclareMathOperator{\RR}{RR}
\DeclareMathOperator{\ob}{ob}
\DeclareMathOperator{\argmin}{argmin}
\DeclareMathOperator{\Aut}{Aut}
\DeclareMathOperator{\Set}{Set}
\DeclareMathOperator{\fun}{fun}
\newcommand{\ed}{\mathrm d}
\newcommand{\red}[1]{{#1}}
\begin{document}

	\title{The Markowitz Category}
	\author{John Armstrong \\ King's College London}
	\date{}
	
	\maketitle

\begin{abstract}
We give an algebraic definition of a Markowitz market and classify
markets up to isomorphism. Given this classification,
the theory of portfolio optimization in Markowitz markets without short selling constraints becomes trivial.
Conversely, this classification shows that, up to isomorphism, there is
little that can be said about a Markowitz market that is not already detected
by the theory of portfolio optimization. In particular, if one seeks to develop a simplified low-dimensional model of a large financial market using mean--variance analysis alone, the resulting model can be at most two-dimensional.
\end{abstract}

\section*{Introduction}

When developing financial models there is a tension between the desire to capture the complexity of financial markets, and the need to simplify, both for for tractability and to avoid over-fitting. This leads one to consider the question of how best to produce low-dimensional approximations to high-dimensional financial models.
As an example of such a dimensional reduction, consider the celebrated one and two mutual fund theorems \cite{merton}. These build on the work of Markowitz in \cite{markowitz} and tell us that, in the Markowitz market model with no restrictions on short selling, an investor who is only interested in the optimal investment problems can safely ignore all but a two-dimensional subspace of the space of portfolios.

This paper considers what can happen if one's interests are more broad-ranging than just the classical optimal investment problem of Markowitz. Are there other low-dimensional subspaces of the Markowitz market model that may be of particular interest to other market players? We will prove that, in a clearly defined sense, the answer to this question is no. Moreover, in the same clearly defined sense, the two mutual fund theorem says all that there is to say about the market. The key, of course, is to give a rigorous explanation of what we mean by this ``clearly defined sense''. This is where we use a little category theory.

Category theory, introduced in \cite{eilenbergmaclane}, formalises the common practice of mathematicians to investigate categories of object up to some notion of equivalence or isomorphism. For example, one might attempt to classify
vector spaces up to bijective linear transformation or finite groups up to
group isomorphism. The advantage of this approach is that spurious details
are ignored. For example the specific set underlying the vector space or the group are irrelevant to their classification up to isomorphism.

Following a similar pattern, in \Cref{section:category} we will define a class of
objects called Markowitz markets and define a notion of a Markowitz isomorphism
between markets. Briefly, a Mark\-o\-witz market is a vector space of possible investment portfolios equipped with: a linear functional that gives the cost of each portfolio; a linear function giving the expected payoff of each portfolio; and a symmetric bilinear form that measures the covariance of two portfolios. An isomoprhism is a map that preserves these structures.

By defining the notion of isomorphism we formally define what we consider to be a financially meaningful feature of a Markowitz market, and what we consider to be spurious information. A financially meaningful property should be preserved by isomorphisms. For example the name of a specific
stock is not financially meaningful and our notion of isomorphism reflects this.

We note that this notion of isomorphism presupposes that risk can be measured adequately by standard deviation. As is well known, there are good reasons for considering other risk-measures, in which case one would require more data to define the market and one would have a different notion of isomorphism. Note, while our theory is predicated on the use of standard deviation to measure risk, it is not dependent upon the distribution of returns. In particular the normal distribution will not play a role in our theory.

Having identified the notion of isomorphism, we then classify all arbitrage-free Markowitz markets up to Markowitz isomorphism in \Cref{theorem:classification}. This is the central result of this paper. The proof only requires elementary linear algebra and can be given without considering portfolio optimization at all. 

In \Cref{section:optimization} we will show how our classification of
Markowitz markets can be applied to the study of portfolio optimization. We will
see that classical results such as the mutual fund theorems are immediately
obvious corollaries of our classification. Moreover, we will observe a close relation between risk-return diagrams and the classification of markets. For
example, we will see that two markets of the same dimension and containing no spurious portfolios of zero cost, zero risk and zero expected payoff are isomorphic if and only if they have the same efficient frontier.

As we shall see, the category theory approach to the problem is in many ways more general and more illuminating than the classical approach of
\cite{merton}. The classical approach is based on direct
calculation and the theory of Lagrange multipliers, while we 
geometric arguments based on the Gram--Schmidt process. \red{Readers who wish to compare
our presentation with the more standard presentations in terms of returns and portfolio weights should consult Appendix \ref{appendix} where we describe in detail there how to translate between the two approaches and give a numerical example}.

We will show how our geometric approach can often be generalized to situations where invariance under Markowitz isomorphisms is broken by choosing another
appropriate category. For example, when
considering the performance of an individual portfolio relative to the market
one should only consider isomorphisms that preserve this portfolio. To use the
jargon of category theory jargon, one is interested in the ``pointed category'' of Markowitz markets with a marked portfolio of cost $1$. This category is classified in \Cref{theorem:pointedCategory}. This theorem explains why risk-return diagrams such as \Cref{fig:efficientfrontier} are such an effective tool for understanding this problem. It is interesting to note that when considering optimal hedging, as is done in \cite{sharpe}, one again seeks a classification of markets with a marked portfolio (this time the asset to be hedged defines the marked portfolio). A priori, one might imagine that analysing the performance of a portfolio is a very different problem from the analysis of hedging a portfolio, yet both problems can be understood using the same classification theorem.

Another generalization we consider is a market with two marked portfolios. This problem naturally occurs in the Capital Asset Pricing Model (CAPM) (as described in, for example,  \cite{jensen} and originally developed in \cite{treynor,sharpeCAPM,lintner,mossin}) . We will show how this theory can be understood via an appropriate classification theorem, in this case \Cref{theorem:capm}. The approach can
be generalized further to include many classical generalizations of CAPM or to derive new results. For example, if one wishes to study the performance of different hedging portfolios using mean-variance analysis one is naturally lead to the question of classifying markets with yet more marked portfolios. Hence it would be straightforward to generalize the CAPM to obtain a model for evaluating the relative performance of hedging portfolios.

In \Cref{section:dimensionReduction} we show that our approach can be used to derive new financially significant results. We formally state and prove a mathematical version of our claim that there are no low-dimensional subspaces of a high dimensional market model that are of special interest to particular market players other than those given by the two mutual fund theorem. Our essential assumption in proving this result is that market players are only interested in markets up to Markowitz isomorphism. Our claim will then follow
from our classification theorem together with some very general ideas derived from category theory, which we summarize in Section \ref{section:invariantDefinitions}.

As a concrete and financially relevant example, consider the practice of applying principal component analysis to the correlation matrix in order to identify interesting subspaces of a market model. This allows one to identify higher dimensional subspaces of a financial model, but at the expense of breaking invariance up to
Markowitz isomorphism. Principal component analysis of the correlation matrix can be justified if one believes that the financial properties of a single stock
and of a basket of stocks are fundamentally different. For example, if one seeks to find specific stocks reflect the market as accurately as possible, Markowitz invariance is broken and principal component analysis may be a useful tool. On the other hand, if one seeks to choose a small number of individual stocks that represent the market as accurately as possible, one cannot go beyond the two mutual fund theorem.

\red{We give two further examples of how our
result can be applied in \Cref{section:dimensionReduction}. Specifically in Section \ref{section:uncertainty} we consider
the important problem of estimating the expected return using historic data and the resultant model uncertainty. This problem has been studied extensively (see for example, \cite{blackLitterman}, \cite{garlappiUppalWang}, \cite{jorion}, \cite{ceriaStubbs}). In Section \ref{section:hedging} we then consider the problem of designing a mutual fund to attract investors with existing liabilities. In the case where the potential investor's liability is known this problem has been studied before in \cite{sharpe}, but we will consider the case where the potential investor's liability is unknown. For both the problem of model uncertainty and the problem of investor's with existing, but unknown, liabilities, our result shows that
one cannot identify interesting portfolios beyond those identified by the two mutual fund theorem without supplying additional data.
}

Finally, we note that although we have chosen to phrase our results in 
terms of financial markets, we observe in \Cref{remark:linearSDE} that our
results also yield a classification for linear stochastic differential equations.
Thus one should expect theorems analogous to the two mutual fund theorem to be ubiquitous in the study of linear stochastic differential equations and hence in the study of the short time behaviour of stochastic differential equations in general.

\section{The Markowitz Category}
\label{section:category}

We begin with a formal definition of our category of markets. We will then describe how these markets arise in finance. We then prove a classification theorem for these markets.

\begin{definition}
A {\em Markowitz market} $(V,r,c,p)$ consists of a finite dimensional real vector space $V$ together with the data:
\begin{enumerate}[(i)]
\item  A symmetric bilinear map $r:V \times V \to \R$ satisfying $r(v,v)\geq0$
for all $v \in V$;
\item  Two linear functionals $c:V \to \R$ and $p:V \to \R$.
\end{enumerate}
\end{definition}

\begin{definition}
A {\em Markowitz morphism} between two Markowitz markets $(V,r,c,p)$ and
$(V^\prime, r^\prime, c^\prime, p^\prime)$ is a linear transformation
$T:V \to V^\prime$ which satisfies:
\begin{equation}
r^\prime(T v, T v) = r( v, v ) \qquad \forall v \in V,
\end{equation}
\begin{equation}
p^\prime(T v) = p(v) \qquad \forall v \in V,
\end{equation}
\begin{equation}
c^\prime(T v) = c(v) \qquad \forall v \in V.
\end{equation}
Two Markowitz markets are said to be isomorphic if there is a bijective
Markowitz morphism from one to the other.
\end{definition}

\red{Together our definition of markets and their morphisms defines what is called a {\em category}. Other examples of categories include: vector spaces and their linear transformations; topological spaces and their continuous maps; groups and their homomorphisms. We will review some essential definitions from category theory, including the definition of a category, in Section \ref{section:invariantDefinitions}. Until then we will not need to use any category theory explicitly.}

Markowitz markets naturally arise in finance.

Consider a trader who
buys and sells $n$ financial assets. The trader is interested in studying
portfolios made up from these assets. A portfolio is defined by knowing the
vector in $\R^n$ that contains the quantity of each asset held. The abstract vector space $V$ in our definition of a Markowitz market represents the space of possible portfolios. A portfolio may contain a negative quantity of a particular asset, this is interpreted financially by saying that a trader may choose to buy assets (a positive quantity) or borrow them (a negative quantity).

In this financial setting, the linear functional $c$ computes the initial cost of setting up a portfolio. If we assume the market is infinitely liquid and that unlimited amounts of each asset can be bought and sold it is reasonable to assume that the cost is indeed linear.

The trader models the financial assets as random variables. The linear functional $p$ computes the expected payoff of the portfolio at some future time $T$. Infinite liquidity and infinite market depth justify the assumption that $p$ is linear. The symmetric bilinear map $r$ computes the covariance of the two portfolios at the future time $T$. Note that here we are assuming that all the assets have finite variance.

The quantity $\sqrt{r(v,v)}$,
(the standard deviation of $v$), should be thought of as the {\em risk} of a portfolio $v$. There is an extensive literature on risk measurement and numerous statistical quantities have been proposed that can be used to measure
the risk of a portfolio. We will not debate the pros and cons of different risk measures here, we simply state that, in the Markowitz framework, risk is measured using standard deviation.

To justify the definition of a Markowitz morphism we assume that the trader
is only interested in the portfolios that are available, their costs, payoffs and risk measured using the standard deviation. The trader sees all other market data as extraneous. In particular the trader is unconcerned by the question of how many assets are combined to produce a portfolio. 

Our aim now is to classify Markowitz markets up to isomorphism. This is an elementary exercise in linear algebra. To reduce
the number of cases in our classification, we will only classify arbitrage-free
markets. These are defined as follows.

\begin{definition}
A {\em \red{Markowitz} arbitrage portfolio} is a portfolio $v \in V$ satisfying $r(v,v)=0$, $c(v)=0$ and $p(v)>0$. A Markowitz market is {\em arbitrage-free} if it
does not contain any \red{Markowitz} arbitrage portfolios.
\end{definition}
\red{If we were to choose a probability model for the asset payoffs compatible with $p$ and $r$ then we would define a classical arbitrage to be a portfolio of zero cost which has an almost surely non-negative payoff and a positive probability of a positive payoff. A Markowitz arbitrage is always a classical arbitrage, but the converse does not hold. Given any values $P$ for the expected payoff and $R$ for the variance we can always find a probability distribution with mean $P$ and variance $R$ which takes positive and negative values with positive probabilities (for example a normal distribution). Hence a Markowitz market is arbitrage-free as defined above if and only if it contains no classical arbitrages whatever compatible probability model is chosen for the payoff distribution. This justifies the use of the term arbitrage-free in our definition of an arbitrage-free Markowitz market.}

\begin{definition}
A portfolio $v \in V$ is said to be {\em risk-free} if $r(v,v)=0$.
A portfolio $v \in V$ is said to be {\em costless} if $c(v)=0$.
A portfolio $v \in V$ is said to be {\em valueless} if $r(v,v)=0$, $c(v)=0$
and $p(v)=0$.
\end{definition}

\begin{lemma}
If $T$ is a Markowitz morphism between $(V,r,c,p)$ and ($V^\prime, r^\prime, c^\prime, p^\prime)$ then
\[
r^\prime(T v_1, T v_2) = r(v_1, v_2) \qquad \forall v_1, v_2 \in V.
\]
\label{lemma:polarizationIdentity}
\end{lemma}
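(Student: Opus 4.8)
The plan is to reduce the statement about the full bilinear form to the hypothesis about its associated quadratic form via the polarization identity. Since $V$ is a real vector space, the scalar $2$ is invertible, and a symmetric bilinear form is recovered from its quadratic form by
\[
r(v_1, v_2) = \tfrac{1}{2}\bigl[r(v_1 + v_2,\, v_1 + v_2) - r(v_1, v_1) - r(v_2, v_2)\bigr],
\]
with the identical formula holding for $r'$ on $V'$. This identity is just bilinearity and symmetry of $r$ expanded out, so it requires no extra hypotheses.

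First I would apply this identity to $r'$ evaluated at the images $Tv_1$ and $Tv_2$, writing $r'(Tv_1, Tv_2)$ as one half of $r'(Tv_1 + Tv_2,\, Tv_1 + Tv_2)$ minus the two diagonal terms $r'(Tv_1, Tv_1)$ and $r'(Tv_2, Tv_2)$. The key step is then to use the linearity of $T$ to rewrite $Tv_1 + Tv_2 = T(v_1 + v_2)$, so that each of the three terms on the right is of the diagonal form $r'(Tw, Tw)$ for $w \in \{v_1 + v_2,\, v_1,\, v_2\}$. The defining property of a Markowitz morphism, $r'(Tw, Tw) = r(w, w)$, applies to each of these, converting every primed diagonal term into its unprimed counterpart.

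Substituting these three equalities back and recognising the resulting expression as the polarization identity for $r$ itself then yields $r'(Tv_1, Tv_2) = r(v_1, v_2)$, as required. I do not anticipate a genuine obstacle here: the only points needing care are the use of linearity of $T$ to combine $Tv_1 + Tv_2$ into a single image (so that the morphism hypothesis can be invoked), and the fact that we are working over $\R$, where division by $2$ is permitted. Both are immediate in the present setting.
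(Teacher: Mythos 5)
Your proof is correct and follows essentially the same route as the paper: both reduce the statement to the diagonal values $r'(Tw,Tw)=r(w,w)$ via a polarization identity, using linearity of $T$ to combine $Tv_1+Tv_2=T(v_1+v_2)$. The only cosmetic difference is that you use the three-term variant $r(v_1,v_2)=\tfrac{1}{2}\bigl[r(v_1+v_2,v_1+v_2)-r(v_1,v_1)-r(v_2,v_2)\bigr]$, while the paper uses the two-term variant $r(v_1,v_2)=\tfrac{1}{4}\bigl[r(v_1+v_2,v_1+v_2)-r(v_1-v_2,v_1-v_2)\bigr]$; both are valid over $\R$.
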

\begin{proof}
This follows immediately from the polarization identity for symmetric bilinear maps:
\begin{equation}
\label{polarizationIdentity}
r(v_1,v_2)=\frac{1}{4}\left(r(v_1+v_2,v_1+v_2) - r(v_1-v_2,v_1-v_2)\right).
\end{equation}
This shows that the entire covariance structure $r$ can be deduced from
knowing the standard deviation $r(v,v)$.
\end{proof}

\begin{lemma}
Define the linear map $\tilde{r}:V \to V^*$ by $\tilde{r}(v)(w)=r(v,w)$ then
the set of risk-free portfolios, $V^0$, is equal to $\ker \tilde{r}$.
\end{lemma}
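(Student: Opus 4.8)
The plan is to prove the two inclusions $\ker \tilde r \subseteq V^0$ and $V^0 \subseteq \ker \tilde r$ separately. The first is immediate and uses no hypothesis on $r$: if $v \in \ker \tilde r$ then $\tilde r(v)(w) = r(v,w) = 0$ for every $w \in V$, and taking $w = v$ gives $r(v,v) = 0$, so $v$ is risk free and lies in $V^0$.

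The reverse inclusion is the substantive part, and this is where the positive semi-definiteness assumption $r(v,v) \geq 0$ becomes essential — without it the statement is false, as the hyperbolic form on $\R^2$ shows. I would fix a risk free portfolio $v$ (so $r(v,v) = 0$) and an arbitrary $w \in V$, and exploit the sign constraint by expanding $r(v + tw, v + tw)$ for a real parameter $t$. Using bilinearity and symmetry this equals $r(v,v) + 2t\,r(v,w) + t^2 r(w,w) = 2t\,r(v,w) + t^2 r(w,w)$, and by hypothesis this must be nonnegative for all $t \in \R$.

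The key step is then to observe that a function of the form $2t\,r(v,w) + t^2 r(w,w)$ can only be nonnegative for all real $t$ if the linear coefficient vanishes: for $t$ small the linear term dominates the quadratic one, so choosing the sign of $t$ opposite to that of $r(v,w)$ would make the expression strictly negative unless $r(v,w) = 0$. Hence $r(v,w) = 0$, and since $w$ was arbitrary we conclude $\tilde r(v) = 0$, i.e. $v \in \ker \tilde r$. Equivalently, one could invoke the Cauchy--Schwarz inequality $r(v,w)^2 \leq r(v,v)\,r(w,w)$, which holds for any positive semi-definite symmetric bilinear form and immediately forces $r(v,w) = 0$ once $r(v,v) = 0$.

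I expect the only genuine subtlety to be making explicit that it is positive semi-definiteness, rather than mere symmetry, that drives the argument; the computation itself is a one-line quadratic estimate.
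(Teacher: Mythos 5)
Your proof is correct and is essentially the paper's own argument: the paper observes that $u \mapsto r(u,u)$ has a local minimum at a risk-free $v$, so its directional derivative $2r(v,w)$ must vanish for every $w$, which is exactly your observation that the quadratic $t \mapsto 2t\,r(v,w) + t^2 r(w,w)$ is nonnegative for all $t$ only if its linear coefficient is zero. You have merely replaced the paper's calculus phrasing with the equivalent elementary algebraic estimate (your Cauchy--Schwarz remark being a standard alternative), so there is no substantive difference in approach.
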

\begin{proof}
If $v \in \ker \tilde{r}$ then $r(v,v)=\tilde{r}(v)(v)=0$. So  $\ker \tilde{r} \subseteq V^0$.

On the other hand, if $r(v,v)=0$ then the function $n(v)=r(v,v)$ has a local minimum at $v$. So the derivative of $n$ in any direction $w \in V$ is equal to zero. This derivative is equal to $2r(v,w)=2\tilde r(v)(w)$.  So $V^0 \subseteq \ker \tilde{r}$.
\end{proof}
\begin{corollary}
If we have a decomposition $V=V^0 \oplus V^1$ for some vector subspace $V^1$
then the value of $r$ on $V$ is determined by its value on $V^1$.
\label{lemma:directSumDecomposition}
\end{corollary}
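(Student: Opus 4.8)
The plan is to reduce everything to the bilinearity of $r$ together with the identification $V^0 = \ker\tilde r$ established in the preceding lemma. The crucial point to extract from that lemma is not merely that $r(v_0,v_0)=0$ for $v_0\in V^0$, but the stronger fact that such a $v_0$ pairs to zero with \emph{every} vector: since $v_0\in\ker\tilde r$ means $\tilde r(v_0)=0$ as an element of $V^*$, we have $r(v_0,w)=\tilde r(v_0)(w)=0$ for all $w\in V$, and by symmetry $r(w,v_0)=0$ as well. In other words, $V^0$ is the radical of the form $r$.

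With this in hand, I would take arbitrary $v,w\in V$ and write them according to the given decomposition as $v=v_0+v_1$ and $w=w_0+w_1$ with $v_0,w_0\in V^0$ and $v_1,w_1\in V^1$. Expanding by bilinearity gives four terms,
\[
r(v,w)=r(v_0,w_0)+r(v_0,w_1)+r(v_1,w_0)+r(v_1,w_1).
\]
Three of these vanish because they involve a $V^0$-argument: $r(v_0,w_0)=r(v_0,w_1)=0$ since $v_0\in V^0$, and $r(v_1,w_0)=0$ since $w_0\in V^0$. Hence $r(v,w)=r(v_1,w_1)$, so the value of $r$ on any pair of vectors is recovered from its values on $V^1$ alone, which is exactly the assertion.

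I do not expect any genuine obstacle here; the statement is a direct consequence of the radical property. The only thing that requires a moment's care is invoking the correct consequence of the previous lemma --- that membership in $V^0=\ker\tilde r$ forces $r(v_0,\cdot)\equiv 0$, rather than just $r(v_0,v_0)=0$ --- since it is this full degeneracy, and not the pointwise vanishing of the associated quadratic form, that annihilates the cross terms.
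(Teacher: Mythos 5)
Your proof is correct, and it differs from the paper's in one small but real way. The paper expands only the quadratic form: writing $v=v_0+v_1$, it computes $r(v,v)=r(v_0,v_0)+2r(v_0,v_1)+r(v_1,v_1)=r(v_1,v_1)$, and then invokes the earlier polarization lemma (\Cref{lemma:polarizationIdentity}) to conclude that knowing $r(v,v)$ everywhere determines the full bilinear form. You instead expand the bilinear form directly in both arguments, $r(v,w)=r(v_0,w_0)+r(v_0,w_1)+r(v_1,w_0)+r(v_1,w_1)=r(v_1,w_1)$, killing all three cross terms with the radical property, so the polarization step is never needed. Both arguments hinge on exactly the same key fact, correctly identified and correctly justified in your write-up: $V^0=\ker\tilde r$ means elements of $V^0$ pair to zero with \emph{every} vector, not merely with themselves. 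Your route is marginally more direct and makes the logical dependence transparent (only the radical property is used); the paper's route is slightly shorter on the page because it reuses a lemma already in hand and needs only one expansion. Your closing remark --- that it is the full degeneracy $r(v_0,\cdot)\equiv 0$, not just $r(v_0,v_0)=0$, that does the work --- is precisely the right point of care, and is the same reason the paper needed its kernel lemma before stating this corollary.
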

\begin{proof}
Let $v=v_0+v_1$ where $v_0 \in V^0$ and $v_1 \in V^1$. Then
\begin{equation*}
\begin{split}
r(v,v)&=r(v_0,v_0)+2r(v_0,v_1)+r(v_1,v_1) \\
&= r(v_1,v_1)
\end{split}
\end{equation*}
The result now follows from \Cref{lemma:polarizationIdentity}.
\end{proof}

If a portfolio satisfies $r(v)=0$, $c(v)=0$ and $p(v)\neq0$ then
either $v$ or $-v$ will be a \red{Markowitz} arbitrage portfolio. So a Markowitz market is arbitrage-free if and only if all costless, risk-free portfolios are
valueless. This yields the following result:
\begin{lemma}
[Classification of arbitrage-free riskless markets]
In an arbitrage-free Markowitz market, we can write $V^0 = V^R \oplus ((\ker c) \cap V^0 )$ where $V^R$ is zero or one dimensional. If $V^R$ is
one dimensional it is spanned by a single portfolio $v_R$ of cost $1$.
$p=0$ on $(\ker c) \cap V^0$.
\label{lemma:risklessMarkets}
\end{lemma}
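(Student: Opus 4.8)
The plan is to study the restriction $c|_{V^0}$ of the cost functional to the riskless subspace $V^0 = \ker\tilde r$ and to split into two cases according to whether this restriction vanishes. If $c|_{V^0}=0$, then $(\ker c)\cap V^0 = V^0$, and I would take $V^R=\{0\}$, the zero-dimensional case, so that the decomposition $V^0 = V^R \oplus ((\ker c)\cap V^0)$ holds trivially.

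If instead $c|_{V^0}\neq 0$, then there is some $w\in V^0$ with $c(w)\neq 0$; rescaling, put $v_R = w/c(w)$ so that $c(v_R)=1$, and set $V^R = \operatorname{span}(v_R)$. Here $(\ker c)\cap V^0$ is precisely the kernel of the nonzero functional $c|_{V^0}$, hence a hyperplane of $V^0$, and since $v_R\notin\ker c$ the intersection $V^R\cap((\ker c)\cap V^0)$ is trivial. A dimension count then yields the direct sum $V^0 = V^R\oplus((\ker c)\cap V^0)$ with $V^R$ one-dimensional and spanned by a portfolio of cost $1$, as required.

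For the final assertion I would appeal directly to the arbitrage-free hypothesis. Any $v\in(\ker c)\cap V^0$ is risk-free (since $v\in V^0$) and costless (since $v\in\ker c$), so by the observation preceding the lemma it must be valueless, forcing $p(v)=0$; hence $p=0$ on $(\ker c)\cap V^0$. I do not expect a genuine obstacle here, since the argument is just a case split on whether $c$ survives on $V^0$ followed by an immediate use of the arbitrage-free condition. The only point needing a moment's care is checking that the splitting in the nonzero case is an honest direct sum — both that $v_R\notin\ker c$ and that the dimensions of the summands add to $\dim V^0$ — but this is routine linear algebra.
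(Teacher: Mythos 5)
Your proof is correct and follows essentially the same route as the paper, which states the lemma as an immediate consequence of the preceding observation that arbitrage-freeness forces every costless, risk-free portfolio to be valueless, leaving the case split on whether $c$ vanishes on $V^0$ and the one-dimensional splitting $V^0 = \operatorname{span}(v_R) \oplus ((\ker c)\cap V^0)$ as routine linear algebra. You have simply written out in full the argument the paper leaves implicit, including the correct rescaling $v_R = w/c(w)$ and the dimension count for the direct sum.
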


We are now ready to state and prove our main \red{mathematical} result which is to give a canonical form for all arbitrage-free Markowitz markets.

The canonical forms
will be expressed in terms of of the vector space $\R^n$. We will write
the bilinear map $r$ on $\R^n$ as an $n \times n$ matrix ${\bf r}$
such that
\[ r(v,w)=v^T {\bf r} w. \]
We will write the linear functionals $c$ and $p$ as co-vectors. We will write
the matrices $\bf{r}$ in block diagonal form and will use the notation $1_k$ for
the $k \times k$ identity matrix and will use $0$ for matrices of zeros whose dimensions can be deduced from the context.

\begin{theorem}
\label{theorem:classification}
We have the following classification of Markowitz markets.
\begin{enumerate}[(a)]
\item \underline{The case $c\neq0$}.

Let $n$ be given. Given four parameters $(k,m,g,i) \in \{0,1, \ldots n\}
\times [0, \infty) \times [0, \infty) \times \R$ which do not lie in 
the set
\begin{equation}
E_n = \{ (k,m,g,i) : (k=n \hbox{ and }m=0) \hbox{ or } (k=0 \hbox{ and }m \neq 0 ) \}
\label{enDefinition}
\end{equation}
we can define an isomorphism class of Markowitz markets, ${\cal M}^n_{k,m,s,q}$, as follows:
\begin{enumerate}[(i)]
\item If $m=0$, ${\cal M}^n_{k,m,g,i}$ is the isomorphism class of the market
$\R^n$ with
\[
{\bf r} = \left( \begin{array}{cc}
1_k & 0  \\
0 & 0
\end{array} \right), \quad c = (0,0,\ldots,0,1), \quad p=(g,0,\ldots,0,i).
\]
\item If $m \in (0, \infty)$, ${\cal M}^n_{k,m,g,i}$ is the isomorphism class of the market
$\R^n$ with
\[
{\bf r} = \left(\begin{array}{cc}
1_k & 0 \\
0 & 0
\end{array} \right), \quad c = \left(\frac{1}{m},0,\ldots,0\right), \quad p=
\begin{cases}
(\frac{i}{m},0, \ldots, 0 ) & \text{if } k =1 \\
(\frac{i}{m},g,0,\ldots,0) & \text{otherwise}.
\end{cases}
\]
Note that when $k=1$ the parameter $g$ is ignored. We have chosen our coordinates $m$ and $i$ for the isomorphism classes so that these variables will have simple geometric and financial explanations. This justifies the apparently unnecessary complexity of using $\frac{1}{m}$ and $\frac{i}{m}$ in the formulae.
\end{enumerate}
Any arbitrage-free Markowitz market of dimension $n$ with $c \neq 0$ belongs to one of these isomorphism classes. The isomorphism classes ${\cal M}^n_{k,m,g,i}$  are distinct except that
\begin{equation}
\label{equivalence1}
\hbox{ if } m \in (0, \infty) \hbox{, then }{\cal M}^n_{1,m,g,i} = {\cal M}^n_{1,m,g^\prime,i} \quad \forall\, g, g^\prime .
\end{equation}
\item \underline{The case $c=0$}.

Any arbitrage-free Markowitz market of dimension $n$ with $c$ identically
zero is Markowitz isomorphic to the market $\R^n$ with
\[
{\bf r} = \left( \begin{array}{cc}
1_k & 0  \\
0 & 0
\end{array} \right), \quad c=(0,0,\ldots,0), \quad p=(g,0,\ldots,0)
\]
where $k$ is a uniquely determined integer between $0$ and $n$. $g=0$
if $k=0$ but otherwise, $g$ is a uniquely determined element of $[0,\infty)$.
\end{enumerate}
\end{theorem}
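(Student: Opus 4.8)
The plan is to build the normal form in stages, fixing $r$ first, then $c$, then $p$. Write $V^0=\ker\tilde r$ for the space of risk-free portfolios identified in the lemma preceding \Cref{lemma:directSumDecomposition}, and choose any complement $V^1$ so that $V=V^1\oplus V^0$. Because $r$ is positive semi-definite and vanishes exactly on $V^0$, its restriction to $V^1$ is a genuine inner product, and by \Cref{lemma:directSumDecomposition} all of $r$ is determined by this restriction. Applying the Gram--Schmidt process to a basis of $V^1$ produces an $r$-orthonormal basis, so in a basis adapted to $V^1\oplus V^0$ the matrix $\mathbf r$ becomes $\mathrm{diag}(1_k,0)$ with $k=\dim V^1=\operatorname{rank}r$. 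This pins down the claimed form of $\mathbf r$ and already exhibits $k$ as the rank of $r$.

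Next I would normalise $c$. In case (b) we have $c=0$ and there is nothing to do; arbitrage-freeness and \Cref{lemma:risklessMarkets} (applied with $\ker c=V$) force $p$ to vanish on $V^0$. In case (a) I split according to whether $c$ vanishes on $V^0$. If $c|_{V^0}=0$ then $c$ vanishes on $\ker\tilde r$, hence lies in the image of $\tilde r$ and is represented through the inner product on $V^1$ by a unique vector; orienting the first Gram--Schmidt vector along it gives $c=(1/m,0,\dots,0)$ with $1/m$ its $r$-norm, and I set $m\in(0,\infty)$. If instead $c|_{V^0}\neq0$, \Cref{lemma:risklessMarkets} supplies a one-dimensional $V^R=\langle v_R\rangle$ with $c(v_R)=1$; replacing each basis vector $e_j$ of $V^1$ by $e_j-c(e_j)v_R$ leaves $r$ unchanged, since $v_R$ is risk-free, and annihilates $c$ on $V^1$, so placing $v_R$ last yields $c=(0,\dots,0,1)$, and I set $m=0$.

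It remains to normalise $p$ using the residual freedom of $r$-orthogonal basis changes that preserve the chosen form of $c$. By \Cref{lemma:risklessMarkets}, $p$ vanishes on the valueless block $(\ker c)\cap V^0$, so $p$ is carried by $V^1$ and the $v_R$-slot. When $m=0$ the whole of $V^1$ is still free to rotate, so I align $e_1$ with the Riesz representative of $p|_{V^1}$ to obtain $p=(g,0,\dots,0,i)$ with $g\ge0$ its norm and $i=p(v_R)$. When $m>0$ the direction $e_1$ is frozen by $c$, so I split the representative of $p|_{V^1}$ into its $e_1$-component, giving the entry $i/m$, and its orthogonal part, which I rotate onto $e_2$ to give the entry $g$; this last step needs $k\ge2$, which is exactly why $g$ is ignored when $k=1$. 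Since the construction is carried out on an arbitrary arbitrage-free market, completeness is automatic, and the excluded set $E_n$ consists of impossibilities: $k=n,m=0$ would require a risk-free unit-cost vector inside $V^0=\{0\}$, and $k=0,m\neq0$ would require $c$ to be supported on $V^1=\{0\}$.

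For distinctness I would exhibit each parameter as an isomorphism invariant, which is where the real work lies. Since morphisms preserve $r$ by \Cref{lemma:polarizationIdentity}, together with $c$ and $p$, the rank $k=\operatorname{rank}r$ is invariant; $m=\min\{\sqrt{r(v,v)}:c(v)=1\}$ is invariant and equals the least risk of a unit-cost portfolio; $i$ is the payoff $p(v^\ast)$ of the minimum-variance unit-cost portfolio $v^\ast$, which is invariant once one checks that $v^\ast$ is unique modulo the valueless block on which $p$ vanishes; and $g=\sup\{p(v)-i\,c(v):r(v,v)=1\}$ is invariant (in case (b) the same scheme gives $k$ together with $g=\sup\{p(v):r(v,v)=1\}$ as complete invariants). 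Evaluating this last supremum in the normal form returns $g$ in every case and returns $0$ identically when $m>0$ and $k=1$, which simultaneously confirms the forms and establishes the sole coincidence \eqref{equivalence1}. I expect this invariant-extraction to be the main obstacle: one must give intrinsic, morphism-stable descriptions of $m$, $g$ and $i$ and verify the uniqueness statements that make $i$ and $g$ well defined, since the geometric normalisations are themselves unique only up to the residual orthogonal symmetry.
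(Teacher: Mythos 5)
Your proposal is correct and follows essentially the same route as the paper's proof: split off the risk-free subspace $V^0$, normalise $r$ by the Gram--Schmidt process, use the residual isometries to align $c$ and then $p$, and establish distinctness via intrinsic, morphism-stable formulas for $k$, $m$, $i$ and $g$. Your small variations --- the shear $e_j \mapsto e_j - c(e_j)v_R$ instead of choosing the complement of $V^0$ inside $\ker c$ directly, and invariants dual to the paper's (minimum risk at unit cost rather than the reciprocal of maximum cost at unit risk, and $\sup\{p(v)-i\,c(v): r(v,v)=1\}$ for $g$) --- are equivalent to the paper's normalisations and invariant definitions, and your evaluation of the $g$-invariant when $k=1$ recovers the coincidence \eqref{equivalence1} just as the paper intends.
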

\begin{proof}
We first assume that $c\neq0$. Case (i) and (ii) can be distinguished in an invariant fashion since there is a risk-free portfolio $v_R$ with $c(v_R)\neq 0$ in case
(i) but not in case (ii). Let us show that conversely if there is such a portfolio we can find a basis such that the market takes the form of case (i), 
and if not, it takes the form in case (ii).
\begin{enumerate}[(i)]
\item  We suppose that a risk-free portfolio with non-zero cost, $v_R$, exists. Take $e_n=\frac{v_R}{c(v_R)}$ and take $k=n-\dim V^0$. Take $\{
e_{k+1}, \ldots e_{n-1}\}$ to be a basis for $(\ker c) \cap V^0$. By \Cref{lemma:risklessMarkets}, $p$ is equal to $0$ on $(\ker c) \cap V^0$. Extend
$\{
e_{k+1}, \ldots e_{n-1}\}$ to a basis $\{v_1, \ldots, v_k,e_{k+1},\ldots, e_{n-1}\}$ for $\ker c$. Let $V_k$
be the span of $\{v_1, \ldots v_k\}$. Then $r$ restricted to $V_k$ gives an inner
product, so by applying the Gram--Schmidt process we can find an orthonormal basis $\{e_1, \ldots e_k\}$ for $r$ restricted to $V_k$. The inner product on $V_k$ gives a duality isomorphism from $V_k$ to $V^*_k$. Let $v_p$ denote the vector in $V_k$ that is dual to the functional $p\restriction_{V_k}$ via this isomorphism. By applying an isometry of the Euclidean space $V_k$ if necessary, we may assume that $v_p$ is a non-negative multiple of $e_1$. When one writes $r$, $c$ and $p$ with respect to the basis
$\{e_1, \ldots e_n\}$ we see from \Cref{lemma:directSumDecomposition} that they take the desired form.

Given that the market is of this form,
$i$ can be invariantly defined as the expected payoff of a riskless portfolio of cost $1$. In the same circumstances, $g$ can be invariantly defined as the maximum value of $p$ among costless portfolios $v$ with $r(v,v) \leq 1$. It follows that $i$ and $g$ are uniquely determined.
\item We suppose that all risk-free portfolios have cost zero. 
Take $k=n-\dim V^0$. Let $\{e_{k+1},\ldots, e_{n}\}$ be a basis for $V^0$.
Extend this to get a basis $\{v_1,\ldots,v_k,e_{k+1},\ldots,e_{n}\}$ for $V$. 
Let $V_k$ denote the span of the $v_k$. It is an inner product space
with respect to $r$, so by applying the Gram-Schmidt process we can obtain
a basis $\{e_1,\ldots,e_k,e_{k+1},\ldots,e_{n}\}$ for $V$ with the $\{e_1, \ldots e_k\}$ orthonormal. By applying an isometry of $V_k$ if necessary, we may assume that the vector dual to $c$ via the inner product on $V_k$ is a positive multiple of $e_1$. By applying a further isometry of the space spanned
by $e_2, \ldots, e_k$, we may assume that the vector dual to $p$ via the inner product on $V_k$ lies in the span of $e_1$ and $e_2$.
Writing the market with respect to this basis now puts it into the desired form.

Given that the market is of this form,
$m$ can be defined invariantly as $1$ over the maximum cost of any portfolio $v$ with $r(v,v)=1$. Define $i^\prime$ invariantly as the payoff $p(v)$ of a portfolio with $r(v,v)=1$ that maximizes the cost. Now $i$ can be defined invariantly by $i^\prime = \frac{i}{m}$. $g$ can be defined invariantly as the maximum expected payoff of any costless portfolio $v$ with $r(v,v)=1$. 
\end{enumerate}
The proof for the case when $c=0$ is similar.
\end{proof}

To avoid considering financially-uninteresting special cases in
the sequel we make the following definition.
\begin{definition}
A Markowitz market is {\em non-degenerate} if:
\begin{enumerate}[(i)]
\item The market is arbitrage-free;
\item There are no valueless portfolios;
\item $c$ and $p$ are linearly independent.
\end{enumerate} 
\end{definition}

It follows from our theorem that all non-degenerate Markowitz
markets of dimension $n$ are of the form ${\cal M}_{n-1,0,g,i}$ or
${\cal M}_{n,m,g,i}$ with $m \in (0, \infty)$ and $g \in (0, \infty)$.

We have identified the set of non-degenerate Markowitz markets up
to isomorphism. We now ask what is the topology of this space?

For a fixed underlying vector space, $V$ we can choose an isomorphism to $\R^n$.  The space of bilinear forms on $V$ can then be viewed as a subspace of $\R^{n^2}$ and so can be given a topology. We can then
give the space of Markowitz markets on $V$ a topology. This topology doesn't depend upon the choice of isomorphism from $V$ to $\R^n$. Thus the space of Markowitz markets has a natural topology. The moduli space of Markowitz markets is defined to be the quotient of the space of Markowitz markets by the equivalence relation given by Markowitz isomorphisms.

With this terminology established we may now prove the following corollary
of \Cref{theorem:classification}.
\begin{corollary}
The moduli space of non-degenerate Markowitz markets of dimension $n\geq 3$
is homeomorphic to the manifold with boundary $[0,\infty)\times(0, \infty) \times \R$. In particular, the map $\tau$ given by $\tau(m,g,i)={\cal M}_{n-\delta_0(m),m,g,i}$ is a homeomorphism. Here
$\delta_0(m)$ is equal to $1$ if $m=0$ and equal to $0$ otherwise.
\end{corollary}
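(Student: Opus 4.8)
The plan is to prove that $\tau$ is a homeomorphism by showing it is a continuous bijection whose inverse is also continuous. The moduli space carries the quotient topology induced by the quotient map $\pi$ from the space of non-degenerate markets on $\R^n$. The universal property of the quotient makes continuity out of the moduli space easy to check but says nothing directly about continuity into it. I will therefore split the argument into three tasks: first, that $\tau$ is a bijection; second, that $\tau$ is continuous, which I will establish by exhibiting an explicit continuous section $\sigma$ of $\pi$ with $\pi\circ\sigma=\tau$; and third, that $\tau^{-1}$ is continuous, which I will obtain by writing down continuous, isomorphism-invariant functions $m,g,i$ on the space of markets and factoring them through $\pi$.

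Bijectivity follows from \Cref{theorem:classification}. The non-degenerate classes of dimension $n$ are exactly $\mathcal M_{n-1,0,g,i}$ with $g\in(0,\infty),\,i\in\R$, and $\mathcal M_{n,m,g,i}$ with $m,g\in(0,\infty),\,i\in\R$, which is precisely the image of $\tau$. For $n\ge 3$ every such class has $k\in\{n-1,n\}\ge 2$, so the sole coincidence among canonical forms, the identification \eqref{equivalence1} that occurs only when $k=1$, never applies; hence distinct triples yield distinct classes and $\tau$ is injective. For continuity of $\tau$ I construct a single representative valid across the boundary $m=0$. Applying the isomorphism $\mathrm{diag}(1,\dots,1,1/m)$ followed by a permutation of coordinates to the canonical form of $\mathcal M_{n,m,g,i}$ shows that for $m>0$ this class is represented by the market below, while at $m=0$ the same expression is literally the canonical form of $\mathcal M_{n-1,0,g,i}$:
\[
\sigma(m,g,i)=\bigl(\,\mathrm{diag}(1,\dots,1,m^2),\ (0,\dots,0,1),\ (g,0,\dots,0,i)\,\bigr).
\]
Every entry is polynomial in $(m,g,i)$, so $\sigma$ is continuous on $[0,\infty)\times(0,\infty)\times\R$ and satisfies $\pi\circ\sigma=\tau$; hence $\tau=\pi\circ\sigma$ is continuous.

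For the inverse I recover the invariants by formulas that are continuous on the whole non-degenerate locus. Writing $\mathrm{adj}(\mathbf r)$ for the adjugate and $D=c\,\mathrm{adj}(\mathbf r)\,c^{T}$, I set $m=\sqrt{\det\mathbf r/D}$ and $i=\bigl(p\,\mathrm{adj}(\mathbf r)\,c^{T}\bigr)/D$. When $\mathbf r$ is invertible, $\mathrm{adj}(\mathbf r)=\det(\mathbf r)\,\mathbf r^{-1}$, so these reduce to $m=1/\sqrt{c\,\mathbf r^{-1}c^{T}}$ and $i=\bigl(p\,\mathbf r^{-1}c^{T}\bigr)/\bigl(c\,\mathbf r^{-1}c^{T}\bigr)$, which are exactly the invariant quantities identified in the proof of \Cref{theorem:classification}; being ratios of polynomials, they also extend across the degenerate locus $\det\mathbf r=0$. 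The remaining invariant $g$ is the dual norm of $p$ restricted to $\ker c$ with respect to $r\restriction_{\ker c}$. Since in a non-degenerate market the only costless risk-free portfolio is $0$ (arbitrage-freeness forces such a portfolio to be valueless, and there are none), the positive semi-definite form $r\restriction_{\ker c}$ has trivial kernel, hence is an inner product; choosing a continuous local basis of the hyperplane $\ker c$ makes $g$ a continuous function of $(r,c,p)$. A direct evaluation on $\sigma(m,g,i)$, where $\det\mathbf r=m^2$, $D=1$, and $r\restriction_{\ker c}$ is the standard form on the first $n-1$ coordinates, returns exactly $(m,g,i)$, so these functions recover the parameters.

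The main obstacle is continuity at the boundary $m=0$, where $\mathbf r$ degenerates and $m=1/\sqrt{c\,\mathbf r^{-1}c^{T}}$ is only formally defined. The decisive point is that the denominator $D=c\,\mathrm{adj}(\mathbf r)\,c^{T}$ never vanishes on the non-degenerate locus. When $\mathbf r$ is invertible this is immediate; and for a non-degenerate market the only degenerate case is corank one, where $\mathrm{adj}(\mathbf r)$ is the rank-one matrix $\lambda\,uu^{T}$ with $u$ spanning $\ker\mathbf r$ and $\lambda\neq 0$, whence $D=\lambda\,c(u)^{2}\neq 0$ because \Cref{lemma:risklessMarkets} forces $c$ to be non-zero on the risk-free line. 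Thus $m$ and $i$ are quotients of polynomials with non-vanishing denominator and are continuous through $m=0$. As the functions $(m,g,i)$ are continuous and isomorphism-invariant, they factor through $\pi$ to give a continuous map on the moduli space, which by the identity $\Phi\circ\sigma=\mathrm{id}$ must be $\tau^{-1}$. Combining this with the continuity of $\tau$ shows that $\tau$ is a homeomorphism onto $[0,\infty)\times(0,\infty)\times\R$.
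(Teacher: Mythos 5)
Your proof is correct, and its overall skeleton matches the paper's: both establish bijectivity from \Cref{theorem:classification}, both prove continuity of $\tau$ by composing the quotient map with an explicit continuous family of representatives (your $\sigma$ is the paper's $\tilde{\tau}$ up to a permutation of coordinates), and both prove continuity of $\tau^{-1}$ by exhibiting continuous, isomorphism-invariant scalar functions recovering $(m,g,i)$. Where you genuinely diverge is in how those invariants are built. The paper's device is the single auxiliary form $\hat{r}(u,v)=r(u,v)+c(u)c(v)$, which is non-degenerate on the whole non-degenerate locus; its dual form $\hat{r}^*$ then yields all three parameters at once via the closed formulas $\hat{r}^*(c,c)=\frac{1}{1+m^2}$, $\hat{r}^*(p,c)=\frac{i}{1+m^2}$, $\hat{r}^*(p,p)=\frac{i^2}{1+m^2}+g^2$, with continuity across $m=0$ automatic because $\hat{r}$ never degenerates. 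You instead treat $m$ and $i$ by adjugate formulas $m=\sqrt{\det\mathbf{r}/D}$, $i=(p\,\mathrm{adj}(\mathbf{r})\,c^T)/D$, and handle the boundary by a corank analysis showing $D=c\,\mathrm{adj}(\mathbf{r})\,c^T$ never vanishes (your appeal to \Cref{lemma:risklessMarkets} plus the no-valueless condition correctly bounds the corank by one and forces $c(u)\neq 0$ on the risk-free line), while $g$ gets a separate treatment as the dual norm of $p\restriction_{\ker c}$, with continuity via locally chosen bases of $\ker c$. Both routes work; the paper's buys a uniform global formula covering all three invariants with no case analysis at the degenerate stratum, while yours stays closer to the invariant characterizations given inside the proof of \Cref{theorem:classification} and makes the algebraic mechanism of the boundary extension (rational functions with non-vanishing denominator) explicit. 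Two small points worth tightening in your version: for the square root defining $m$ you should note $D>0$ (the adjugate of a positive semi-definite corank-one matrix is positive semi-definite with positive eigenvalue $\lambda=\prod_{j}\mu_j$, so $D=\lambda\,c(u)^2>0$, and $D=\det(\mathbf{r})\,c\,\mathbf{r}^{-1}c^T>0$ in the invertible case since $c\neq 0$), and the continuity of $g$ deserves the one-line formula $g^2=\tilde{p}\,\tilde{R}^{-1}\tilde{p}^T$ in a continuously varying basis of $\ker c$, where $\tilde{R}$ is the Gram matrix of $r\restriction_{\ker c}$, rather than the informal statement you give.
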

\begin{proof}
It follows from \Cref{theorem:classification} that $\tau$ is a bijection.
 
Define $\tilde{\tau}(m,g,i)$ to be the market given in matrix form by
\[
{\bf r} = \left( \begin{array}{cc}
m^2 & 0 \\
0 & I_{n-1}
\end{array}
\right),\quad
c = (1,0,0,\ldots,0),\quad
p = (i,g,0,\ldots,0,).
\]
$\tilde{\tau}$ is continuous. The market $\tilde{\tau}(m,g,i)$
is Markowitz isomorphic to $\tau(m,g,i)$. Therefore $\tau$ is continuous.

We can invariantly and continuously associate a non-degenerate bilinear
form $\hat{r}$ with a non-degenerate Markowitz market by defining
\[
\hat{r}(u,v)=r(u,v)+c(u)c(v).
\]
To any non-degenerate bilinear form on a finite dimensional vector space, there is an associated isomorphism between the vector space and its dual. This isomorphism is associated continuously. Thus
we can continuously and invariantly associate a bilinear form acting on $V^*$ with any non-degenerate Markowitz market. We will write $\hat{r}^*$ for this form.

A short calculation shows that in both cases (i) and (ii) of \Cref{theorem:classification} we have $\hat{r}^*(c,c)=\frac{1}{1+m^2}$. 
Therefore
\[
m = \sqrt{\frac{1}{\hat{r}^*(c,c)}-1}.
\]
Thus the function $m$ defined on the moduli space of non-degenerate markets is continuous. We calculate
similarly that $\hat{r}^*(p,c)=\frac{i}{1+m^2}$ and $\hat{r}^*(p,p)=\frac{i^2}{1+m^2} + g^2$. Thus $m$, $i$ and $g$ are continuous functions on the moduli space of non-degenerate Markowitz markets.
Hence $\tau^{-1}$ is continuous.
\end{proof}

\begin{remark}
\label{remark:linearSDE}
We have called our algebraic structure a Markowitz market to emphasize its
financial relevance. However, this same structure occurs naturally in the abstract setting of linear stochastic differential equations.
Let $X_t$ be a stochastic process in an $n$-dimensional vector space $U$
determined by a linear stochastic differential equation driven by
$n$-dimensional Brownian motion with initial condition given by a known value for $X_0$. In coordinates we may write:
\[
\ed X^i_t = (\mu)^i \ed t + \sum_{i=1}^n (\sigma)^{ij} W^j_t
\]
for constants $\mu^i$ and $\sigma^{ij}$.
We will say that two such processes $X^1_t \in U^1$ and $X^2_t \in U^2$ are equivalent if there exists an isomorphism of $T:U^1\to U^2$ such that
$T X^1_t=X^2_t$ in distribution. We may associate a Markowitz market
to an SDE by taking the vector space $V=U^*$ and defining forms $a$, $b$ and $r$ as follows:
\[
c(\alpha) = \alpha(X_0) \text{ for } \alpha \in U^*;
\]
\[
p(\alpha) = \alpha\left({\mathbb E}\left(\frac{X_t}{t}\right)\right) \text{ for } \alpha \in U^*;
\]
\[
r(\alpha,\beta) = [\alpha(X),\beta(X)]_t \text{ for } \alpha, \beta \in U^*
\]
where $[Y^1,Y^2]_t$ denotes the quadratic covariation of two processes $Y^1_t$ and $Y^2_t$. Note that the definitions of $b$ and $a$ are independent of the choice of $t>0$. As is clear from our coordinate free definitions for $a$, $b$ and $r$, these forms are defined independently of the choice of basis for $\R^n$.
It is easy to see that we have established a one-to-one correspondence between Markowitz markets and linear stochastic differential equations. Thus our theorems can be interpreted as giving a partial classification
of linear stochastic differential equations up to linear transformation. We say that this is a partial classification since in this more general context, the ``arbitrage-free'' assumption may no longer be very natural and one should consider additional cases. We do not explore this further in this paper as 
our focus is on financial applications.
\end{remark}

\section{Portfolio Optimization}
\label{section:optimization}

Armed with our classification theorem, the study of portfolio optimization in Markowitz markets becomes entirely trivial.
\begin{definition}
Given a Markowitz market, a portfolio $v_0$ is said to be {\em risk minimizing}
if its risk $r(v_0,v_0)$ is equal to the minimum risk among all portfolios, $v$, with $c(v)=c(v_0)$ and $p(v)=p(v_0)$.
\label{defn:riskMinimizing}
\end{definition}
\begin{theorem}[Two mutual-fund theorem]
In a non-degenerate Markowitz market with no risk-free portfolios, the set of risk-minimizing portfolios is a vector subspace of $V$ of dimension at most $2$.
Moreover, for any feasible payoff and cost there is an associated risk-minimizing portfolio. This is called the two mutual-fund theorem because
the space of risk-minimizing portfolios can spanned by two portfolios, these are the ``mutual-funds''.
\label{thm:twoMutualFund}
\end{theorem}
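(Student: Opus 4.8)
The plan is to work invariantly and exploit that the hypotheses force $r$ to be a genuine inner product. First I would observe that ``no risk-free portfolios'' means $V^0 = \{0\}$; since the earlier lemma identifies $V^0$ with $\ker \tilde r$, the map $\tilde r : V \to V^*$ is injective, hence (being a map between spaces of equal finite dimension) an isomorphism, and together with $r(v,v) \ge 0$ this makes $r$ a positive-definite inner product. Equivalently, \Cref{theorem:classification} places the market in the class with $\mathbf r = 1_n$; I would cite this as a sanity check but prefer the coordinate-free argument.

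Next I would introduce the subspace that will turn out to be the space of mutual funds. Define $W = \operatorname{span}(\tilde r^{-1} c,\, \tilde r^{-1} p) \subseteq V$, the span of the $r$-duals of the cost and payoff functionals. Since $c$ and $p$ are linearly independent (non-degeneracy) and $\tilde r$ is an isomorphism, $\dim W = 2$. Feasibility of every cost/payoff target is then immediate: the linear map $v \mapsto (c(v), p(v))$ has image $\R^2$ because $c$ and $p$ are independent, so for any $(c_0, p_0)$ the constraint set is non-empty. This disposes of the ``for any feasible payoff and cost'' clause.

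The geometric heart of the argument is the identification of $W$ with the orthogonal complement, with respect to $r$, of the constraint directions. I would compute $\ker c = (\tilde r^{-1} c)^\perp$ and $\ker p = (\tilde r^{-1} p)^\perp$, whence $\ker c \cap \ker p = W^\perp$ and $V = W \oplus W^\perp$ orthogonally. For a fixed target $(c_0, p_0)$, the feasible set is an affine translate $v_* + W^\perp$, and minimizing $r(v,v) = \lVert v \rVert^2$ over it amounts to finding the point of that affine set nearest the origin. By orthogonality this minimizer is unique and equals the $W$-component of $v_*$; in particular it lies in $W$. This is the same projection step that underlies the Gram--Schmidt constructions used earlier, so it fits the geometric style of the paper.

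The step I expect to be the main obstacle --- really the only thing requiring care --- is proving that the set of risk-minimizing portfolios is \emph{exactly} $W$, so that it is a subspace rather than merely a union of individual minimizers. One inclusion is the projection computation just described: every risk-minimizer lies in $W$. For the reverse inclusion I would take an arbitrary $w \in W$, note that its own cost and payoff are $(c(w), p(w))$, and observe that since $w$ already lies in $W = (W^\perp)^\perp$ it is the norm-minimizer among all portfolios sharing those constraints, so $w$ is risk-minimizing. Together these give equality, so the risk-minimizing set is the two-dimensional subspace $W$ and the ``dimension at most $2$'' bound is met, with equality here. I would close by remarking that the two spanning duals $\tilde r^{-1} c$ and $\tilde r^{-1} p$ are the two mutual funds.
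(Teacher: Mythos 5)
Your proposal is correct, but it takes a genuinely different route from the paper's own proof. The paper treats the theorem as an almost-immediate corollary of \Cref{theorem:classification}: the absence of non-zero risk-free portfolios forces case (ii) with $k=n$, so in canonical coordinates $\mathbf{r}=1_n$, the constraints $c(v)=c(v_0)$, $p(v)=p(v_0)$ fix exactly the first two components of a portfolio, and the risk (a sum of squares) is minimized by zeroing the remaining components, making the risk-minimizing set visibly $\operatorname{span}\{e_1,e_2\}$. You bypass the classification entirely and argue invariantly: positive-definiteness of $r$ from $V^0=\ker\tilde{r}=\{0\}$, identification of each constraint set as an affine translate of $W^\perp$ where $W=\operatorname{span}(\tilde{r}^{-1}c,\,\tilde{r}^{-1}p)$, and orthogonal projection onto $W$, with the reverse inclusion $W\subseteq\{\text{risk-minimizers}\}$ handled correctly via $W=(W^\perp)^\perp$. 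The two arguments are the same orthogonal geometry --- the paper simply performs the projection step once, inside the Gram--Schmidt normalization of the classification, and then reuses those coordinates for the efficient-frontier results that follow --- but your version buys several things: it is self-contained modulo the kernel lemma; it exhibits the two mutual funds by the invariant formulas $\tilde{r}^{-1}c$ and $\tilde{r}^{-1}p$ (the coordinate-free form of the classical $\Sigma^{-1}\mathbf{1}$ and $\Sigma^{-1}\mu$ funds), whereas the paper's $e_1,e_2$ are only canonical after choosing the normal form; it establishes uniqueness of the minimizer for each cost/payoff pair, which the paper does not state; and it sharpens ``dimension at most $2$'' to exactly $2$ under non-degeneracy, since $c$ and $p$ are linearly independent. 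You also read ``no risk-free portfolios'' as $V^0=\{0\}$, i.e.\ no \emph{non-zero} risk-free portfolios (the zero portfolio is always riskless), which matches the reading the paper itself adopts in its proof.
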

\begin{proof}
Since there are no non-zero risk-free portfolios, we are in case (ii) of our classification, \Cref{theorem:classification}. In this case, our
vector space is Euclidean space with risk measured by distance, making the result geometrically obvious. We give a few formal details for completeness.

Two portfolios $v$ and $v_0$ have the same cost
and expected payoff if and only if their first two components are equal. The risk is equal to the sum of the squares of the components, and hence is minimized by taking all components other than the first two equal to zero. Hence
the space of risk-minimizing portfolios is the vector space spanned by
the standard basis vectors $\{e_1, e_2\}$.
\end{proof}
\begin{theorem}[One mutual-fund theorem]
In a non-degenerate Markowitz market with a risk-free portfolio 
the set of risk-minimizing portfolios is a vector subspace of $V$ of dimension
at most $2$ and contains the risk-free portfolio. For any feasible payoff and cost there is an associated risk-minimizing portfolio. This is called the one mutual-fund theorem because
the space of risk-minimizing portfolios can spanned by one arbitrary portfolio 
and a risk-free portfolio.
\end{theorem}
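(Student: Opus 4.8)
The plan is to mirror the proof of the two mutual-fund theorem, reducing to the canonical form supplied by \Cref{theorem:classification} and then making a direct geometric observation. Since a Markowitz morphism preserves $r$, $c$ and $p$, it carries risk-minimizing portfolios to risk-minimizing portfolios and risk-free portfolios to risk-free portfolios, and it preserves dimension; consequently every assertion in the statement is invariant under Markowitz isomorphism, so it suffices to verify it for a single representative of the isomorphism class. A non-degenerate market possessing a risk-free portfolio falls into case (i) of \Cref{theorem:classification}, since the existence of a risk-free portfolio of non-zero cost is exactly what distinguishes case (i) from case (ii), and non-degeneracy forces it to be of the form ${\cal M}_{n-1,0,g,i}$ with $g \in (0,\infty)$.

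First I would write the market in the canonical coordinates of case (i). With $k = n-1$, a portfolio $v = (x_1, \ldots, x_n)$ has risk $r(v,v) = x_1^2 + \cdots + x_{n-1}^2$, cost $c(v) = x_n$ and payoff $p(v) = g x_1 + i x_n$. The risk-free portfolio is $v_R = e_n$, the last basis vector, since $r(e_n,e_n)=0$. Next I would fix a feasible cost $c_0$ and payoff $p_0$ and locate the risk-minimizing portfolio. Fixing the cost pins down $x_n = c_0$; since $g \neq 0$, fixing the payoff then pins down $x_1 = (p_0 - i c_0)/g$. The remaining coordinates $x_2, \ldots, x_{n-1}$ are free and contribute $x_2^2 + \cdots + x_{n-1}^2 \geq 0$ to the risk, so the unique minimizer sets them all to zero.

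Hence for every feasible $(c_0, p_0)$ a risk-minimizing portfolio exists and lies in $\mathrm{span}(e_1, e_n)$. Because $c$ and $p$ are linearly independent, the map $v \mapsto (c(v),p(v))$ is surjective onto $\R^2$, so every pair $(c_0,p_0)$ is feasible, and as $(c_0,p_0)$ ranges over $\R^2$ the corresponding minimizers fill out all of $\mathrm{span}(e_1,e_n)$. This exhibits the set of risk-minimizing portfolios as the two-dimensional subspace $\mathrm{span}(e_1,e_n)$, which in particular has dimension at most $2$ and contains the risk-free portfolio $v_R = e_n$. Since $e_1$ is not risk-free, any risk-minimizing portfolio lying outside $\mathrm{span}(v_R)$, together with $v_R$, spans this subspace, which is precisely the ``one fund plus the risk-free asset'' assertion.

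There is no serious obstacle here: the classification theorem has already done the work of straightening the covariance form by Gram--Schmidt and aligning $c$ and $p$ with the coordinate axes. The only point requiring genuine care is the reduction step itself, where one must confirm that the notions ``risk-minimizing'', ``risk-free'' and ``feasible'' are all isomorphism-invariant, so that checking the conclusion on the canonical representative is legitimate. Everything after that reduction is the same one-line minimization as in the two mutual-fund theorem, the sole difference being that here the distinguished coordinate direction $e_n$ is risk-free rather than risk-bearing.
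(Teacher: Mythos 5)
Your proposal is correct and is essentially the paper's own argument: the paper dismisses this theorem as ``an obvious consequence of case (i) of \Cref{theorem:classification}'', and what you have written is precisely that consequence spelled out, mirroring the coordinate computation the paper gives for the two mutual-fund theorem (fix $x_n=c_0$ and $x_1=(p_0-ic_0)/g$, then minimize the sum of squares by zeroing the remaining coordinates). Your extra care about isomorphism-invariance of ``risk-minimizing'', ``risk-free'' and ``feasible'', and about non-degeneracy forcing the form ${\cal M}_{n-1,0,g,i}$ with $g>0$, is exactly the content the paper leaves implicit.
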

\begin{proof}
An obvious consequence of case (i) of \Cref{theorem:classification}
\end{proof}

We have not yet used the concept of {\em return} of a portfolio. In standard
treatments of Markowitz's theory it is usual to rescale investment problems in terms of the initial cost of a portfolio. This rescaling function is non-linear and not even defined for portfolios of zero cost. It often seems to unnecessarily complicate the discussion. For example, we
have stated the mutual-fund theorems in terms of vector spaces which we believe makes them
much easier to understand than conventional presentations. 

However, the idea that one might be able to rescale and transform a market to simplify it is central to our discussion; it is simply that returns are the ``wrong'' rescaling. We have observed that the covariance structure $r$ defines a natural length scale for the problem and have
transformed our coordinates ao that this becomes the standard Euclidean metric. This transformation has the advantage of being linear. This observation is generally useful throughout probability theory: covariance matrices define natural length scales.

\begin{definition}
The {\em expected return} of a portfolio, $v$ with non-zero cost is given by
\begin{equation}
\ER(v):=\frac{p(v)-c(v)}{c(v)}.
\label{eq:er}
\end{equation}

The {\em relative risk} of such a portfolio is given by
\[
\RR(v):= \frac{\sqrt{r(v,v)}}{c(v)}.
\]
Let $\phi$ map the set $V \setminus (\ker c)$ to $\R^2$ by
$\phi(v)=(\RR(v), \ER(v))$. The image of $\phi$ is called
the {\em feasible set}. The image of the set of risk-minimizing portfolios
is called the {\em efficient frontier}. The shape of the efficient frontier was identified in \cite{merton}.
\end{definition}

\begin{theorem}
In a non-degenerate Markowitz market, ${\cal M}_{n-\delta_0(m),m,g,i}$ 
with $n\geq 2$, the efficient frontier consists of the points $(x,y)$ with $x\geq0$ and
\begin{equation}
g^2(x^2 - m^2) = (y+1-i)^2.
\label{eq:hyperbola}
\end{equation}
When $n=2$, the feasible set is equal to the efficient frontier. When $n >2$, 
the feasible set is the set of all points on, or to the right of, the efficient frontier.
\end{theorem}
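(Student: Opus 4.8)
The plan is to reduce everything to the two canonical forms supplied by \Cref{theorem:classification}. Since the market is non-degenerate with $n\geq 2$, it is Markowitz isomorphic either to the $m=0$ model (with $k=n-1$ and a risk-free direction $e_n$ of cost $1$) or to the $m>0$ model (with $k=n$), and in both $g\in(0,\infty)$. Because $r(v,v)$, $c(v)$ and $p(v)$ are each preserved by a Markowitz morphism, so are $\ER$ and $\RR$, and hence $\phi$; thus it suffices to compute $\phi$ on these explicit coordinate models. I would also record two elementary invariances of $\phi$: it is constant on positive rays, $\phi(\lambda v)=\phi(v)$ for $\lambda>0$, and it satisfies $\phi(-v)=(-\RR(v),\ER(v))$. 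The second shows the feasible set and the efficient frontier are symmetric under reflection in the $y$-axis, so it suffices to describe the part arising from positive-cost portfolios, and this is exactly what produces the constraint $x\geq 0$.

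First I would pin down the risk-minimizing subspace. In the $m>0$ model $c(v)=v_1/m$ and $p(v)=iv_1/m+gv_2$, so fixing cost and payoff fixes $v_1$ and, since $g\neq 0$, also $v_2$; minimizing $r(v,v)=\sum_j v_j^2$ then forces $v_3=\cdots=v_n=0$, giving risk minimizers $\mathrm{span}(e_1,e_2)$. The same argument in the $m=0$ model gives $\mathrm{span}(e_1,e_n)$, with $e_n$ the risk-free portfolio. Restricting $\phi$ to this plane and writing a single free ratio $t$ (namely $v_2/v_1$, respectively $v_1/v_n$), a direct computation gives $y+1-i=gmt$ with $x=m\sqrt{1+t^2}$ in the first case, and $y+1-i=gt$ with $x=|t|$ in the second. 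Eliminating $t$ yields $g^2(x^2-m^2)=(y+1-i)^2$ in both cases, and since the right-hand side is nonnegative and $g>0$ this already forces $x\geq m$, consistent with $x\geq 0$. This establishes the efficient frontier.

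For the feasible set I would split on $n$. When $n=2$ the risk-minimizing subspace is all of $V$, so the feasible set coincides with the efficient frontier. When $n>2$ there is at least one coordinate beyond those carrying $c$ and $p$, and I would parametrize a general $v$ with $c(v)>0$ by the same free ratio together with a nonnegative slack $w\geq 0$ recording the squared length in the remaining risk directions, normalized by the cost coordinate. A direct computation in each canonical form then reproduces the same value of $y+1-i$ while showing that $g^2(x^2-m^2)-(y+1-i)^2$ equals a strictly positive constant times $w$, hence is nonnegative. Thus the attainable points are exactly $\{(x,y): x\geq 0,\ g^2(x^2-m^2)\geq (y+1-i)^2\}$, the region on or to the right of the frontier; surjectivity onto this region holds because $n>2$ lets $w$ range over all of $[0,\infty)$.

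The main obstacle is bookkeeping rather than conceptual: correctly tracking the sign of $c(v)$ so that the reflection symmetry cleanly accounts for the restriction $x\geq 0$, and verifying the surjectivity direction for the feasible set, i.e. that every point of the claimed region is actually attained. This is precisely where the hypothesis $n>2$ is essential, since it provides the slack coordinate, and where the boundary case $n=2$ degenerates to equality.
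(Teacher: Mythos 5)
Your proposal is correct and follows essentially the same route as the paper: reduce to the canonical forms of \Cref{theorem:classification}, parametrize cost-one portfolios on the risk-minimizing plane, eliminate the free parameter to obtain \eqref{eq:hyperbola}, and introduce a slack coordinate when $n>2$ to fill out the feasible region. If anything you are more careful than the paper, which silently restricts to positive-cost portfolios and dismisses case (i) as ``similarly easy to calculate'', whereas your reflection identity $\phi(-v)=(-\RR(v),\ER(v))$ and your explicit $m=0$ computation address exactly those glossed-over points.
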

\begin{proof}
We consider first case (ii) of \Cref{theorem:classification} when $m>0$. Because of the scaling by cost in the definition of $\ER$ and $\RR$ we see that we need only consider the image of portfolios of cost $1$.

An efficient
portfolio with cost $1$ takes the form $v=(m,\lambda,0,\ldots,0)$ for some $\lambda$. It is mapped to:
\[
\phi(v)=\left(\sqrt{m^2 + \lambda^2}, i + g \lambda - 1 \right).
\]

We can compute $g^2 \lambda^2$ from either the $x$-coordinate or $y$-coordinate of $\phi(v)$. Equating these expressions gives the expression \eqref{eq:hyperbola}.
Since $g\neq0$ we see that the $y$-coordinate of $\phi(v)$ can take any real value, so the efficient frontier is the right arm of the hyperbola satisfying \eqref{eq:hyperbola}.

If $n=2$ all portfolios are efficient. If $n>2$, the portfolio $(m,\lambda,\mu,0,\ldots,0)$ is mapped by $\phi$ to
\[
\left(\sqrt{m^2 + \lambda^2 + \mu^2}, i + g \lambda - 1 \right).
\]
So an
y point to the right of the efficient frontier is feasible.

The efficient frontier and feasible set are similarly easy to calculate in case (i) of \Cref{theorem:classification}.
\end{proof}

The feasible set and the efficient frontier are iconic images of Markowitz's theory. They are illustrated in \Cref{fig:efficientfrontier}.

\begin{figure}[htbp]
\centering
\includegraphics[width=0.4\linewidth]{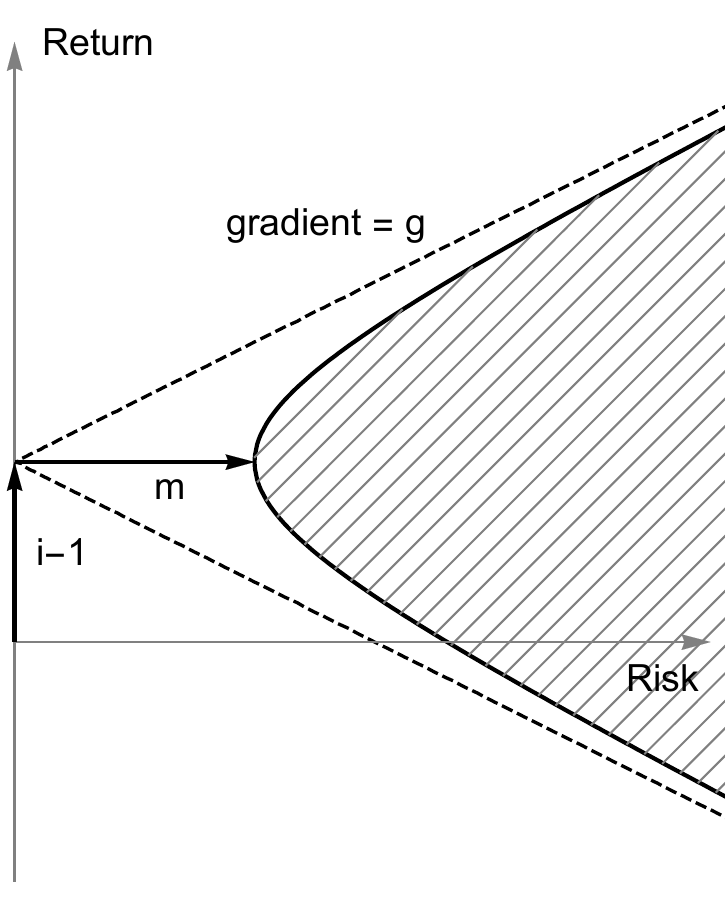}
\caption{The efficient frontier (curved line) and the feasible set (shaded).}
\label{fig:efficientfrontier}
\end{figure}

We now see the justification for our choice of parameter names for the space of Markowitz markets. The parameter $m$ measures the {\underline m}inimum risk of a portfolio of cost $1$,
the parameter $g$ measures the {\underline g}radient of the asymptotes when $m>0$ or the slope of the lines that the hyperbola degenerates to when. The parameter $i-1$ corresponds to the {\underline i}ntercept on the $y$-axis where the asymptotes meet.

From our point of view, the importance of the feasible set and the efficient frontier is explained by the following result.
\begin{theorem}
\label{theorem:pointedCategory}
Let $M_1$ and $M_2$ be two non-degenerate Markowitz markets of dimension $n$, Let $v_1$ and $v_2$ be portfolios in $M_1$ and $M_2$ repectively, each of cost 1. Then there exists a Markowitz isomorphism of $M_1$ and $M_2$ sending $v_1$ to $v_2$ if and only if the efficient frontiers of $M_1$ and $M_2$ are equal and $\phi(v_1)=\phi(v_2)$.
\end{theorem}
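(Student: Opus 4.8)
The plan is to prove both implications of the biconditional, treating the forward direction as a routine consequence of the definitions and concentrating effort on the backward direction, where the geometry lives. A preliminary observation underpins everything: by \eqref{eq:hyperbola} the efficient frontier is the right arm of the (possibly degenerate) hyperbola $g^2(x^2-m^2)=(y+1-i)^2$, and from this curve one reads off its vertex $(m,i-1)$, its centre $(0,i-1)$, and its asymptotic slopes $\pm g$. Since $m\geq 0$ and $g>0$ there is no sign ambiguity, so the efficient frontier determines and is determined by the triple $(m,g,i)$. By the classification of \Cref{theorem:classification} (and the moduli description), equality of the efficient frontiers of $M_1$ and $M_2$ is therefore equivalent to $M_1$ and $M_2$ being Markowitz isomorphic as unpointed markets.

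For the forward direction, suppose $T\colon M_1\to M_2$ is a Markowitz isomorphism with $Tv_1=v_2$. Because $T$ preserves $r$, $c$ and $p$ it preserves both $\RR$ and $\ER$, whence $\phi(v_2)=\phi(Tv_1)=\phi(v_1)$; and since $T$ carries risk-minimizing portfolios to risk-minimizing portfolios while commuting with $\phi$, the two efficient frontiers coincide. This direction needs no case analysis.

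For the backward direction I would first use the preliminary observation to fix an unpointed isomorphism $M_1\to M_2$; composing with it, we may assume $M_1=M_2=M$ is in the canonical form of \Cref{theorem:classification} and that both portfolios live in $M$, so the task becomes producing a Markowitz \emph{automorphism} of $M$ sending $v_1$ to $v_2$. The crux is to compute $\mathrm{Aut}(M)$. When $m>0$ the form $r$ is the standard inner product, so an automorphism lies in $O(n)$; the conditions $c\circ T=c$ and $p\circ T=p$ force $T$ to fix the vectors dual to $c$ and $p$, which span the coordinate plane $\langle e_1,e_2\rangle$ because $g\neq 0$. Hence $\mathrm{Aut}(M)$ consists exactly of the orthogonal maps fixing this plane pointwise and acting arbitrarily on its orthogonal complement, so $\mathrm{Aut}(M)\cong O(n-2)$. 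The case $m=0$ is analogous but requires isolating the one-dimensional radical $V^0$ of risk-free portfolios first: preservation of $r$ and $c$ pins down the risk-free/cost direction, preservation of $p$ pins down the payoff direction, and one again finds $\mathrm{Aut}(M)\cong O(n-2)$ acting on the remaining risky coordinates.

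Finally I would match the two portfolios. Writing a cost-$1$ portfolio as $v=(m,\lambda,w)$ with $w$ in the $(n-2)$-dimensional complement (and analogously when $m=0$), one computes $\ER(v)=i+g\lambda-1$ and $\RR(v)=\sqrt{m^2+\lambda^2+\|w\|^2}$. Equality $\phi(v_1)=\phi(v_2)$ then forces $\lambda_1=\lambda_2$ (as $g\neq 0$) and hence $\|w_1\|=\|w_2\|$. Since $O(n-2)$ acts transitively on each sphere, some element carries $w_1$ to $w_2$; the corresponding automorphism fixes $m$ and $\lambda$ and so sends $v_1$ to $v_2$, completing the proof. I expect the main obstacle to be the correct determination of $\mathrm{Aut}(M)$ — in particular the degenerate $m=0$ case where $r$ is not full rank — together with checking the low-dimensional boundary cases $n=2$ (where the complement is empty and $\phi(v_1)=\phi(v_2)$ already forces $v_1=v_2$) and $n=3$ (where $O(1)=\{\pm 1\}$ still suffices because equal-norm vectors in $\R^1$ differ only by a sign).
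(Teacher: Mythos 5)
Your proposal is correct and takes essentially the same route as the paper's own proof: recover $(m,g,i)$ from the shape of the frontier (distinguishing cases (i) and (ii) by whether the hyperbola degenerates), reduce via \Cref{theorem:classification} to the canonical form, and then use orthogonal maps fixing the $\langle e_1,e_2\rangle$-plane --- exactly the paper's ``rotation of the inner product space spanned by $\{e_3,\ldots,e_k\}$'' --- to carry $v_1$ to $v_2$ once $\phi(v_1)=\phi(v_2)$ pins down $\lambda$ and $\|w\|$. Your explicit computation of the automorphism group as $O(n-2)$, the worked $m=0$ case, and the checks at $n=2,3$ merely fill in details the paper leaves implicit (``a similar argument can be applied in case (i)'').
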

\begin{proof}
By assumption we are in either case (i) or case (ii) of \Cref{theorem:classification}.
We are in case (ii) if and only if the efficient frontier is one arm
of a  hyperbola.

In case (ii), our explicit formula for the efficient frontier shows that $m$, $g$ and $i$ can be recovered from its shape as shown in 
\Cref{fig:efficientfrontier}.

After a rotation of the inner product space spanned by $\{e_3,e_4,\ldots,e_k\}$, any portfolio in $M_1$ of cost $1$ can be written as $(m,\lambda,\mu,0,\ldots,0)$. The $\mu$ coefficient measures how far the image of $v$ under $\phi$ is to the right of the efficient frontier.
The $\lambda$ term identifies the point on the efficient frontier to the left of $\phi(v)$.

A similar argument can be applied in case (i).
\end{proof}

\Cref{theorem:pointedCategory} classifies the pointed category of non-degenerate markets with a marked portfolio of cost $1$. As we discussed in the introduction, this is the natural category to consider when comparing the peformance of a single portfolio to the market as a whole. 

As another example of how our approach can be generalized, consider the problem of comparing the performance of two portfolios within a market. The natural category is the category of Markowitz markets with two marked portfolios of cost $1$, which we will label $v^m$ and $v^i$. We think of $v^m$ as being a market portfolio, perhaps a stock index such as the S\&P 500, and $v^i$ being a specific portfolio whose performance we wish to evaluate. This situation can
be understood by the following classification theorem.

\begin{theorem}
\label{theorem:capm}
Let $M_1$ and $M_2$ be two non-degenerate Markowitz markets of dimension $n$
with marked portfolios $v^m_1$ and $v^i_2$ in $M_1$ and 
$v^m_2$ and $v^i_2$ in $M_2$ respectively. All the marked portfolios are of cost 1. We also assume that none of these portfolios are risk-free. There exists a Markowitz isomorphism of $M_1$ and $M_2$ sending $v^m_1$ to $v^m_2$ and $v^i_1$ to $v^i_2$ if and only if the efficient frontiers of $M_1$ and $M_2$ are equal, $\phi(v^m_1)=\phi(v^m_2)$, $\phi(v^i_1)=\phi(v^i_2)$ and $r(v^m_1,v^i_1)=r(v^m_2,v^i_2)$.
\end{theorem}
\begin{proof}
This is another geometrically obvious corollary of \Cref{theorem:classification}.
\end{proof}

Thus within any fixed market $M$ with a marked market portfolio $v^m$ the properties of a portfolio $v^i$ are determined entirely by $\phi(v^i)$ and the quantity $\beta^i:=\frac{r(v^m,v^i)}{r(v^m,v^m)}$. Thus this theorem gives a geometric interpretation of the Capital Asset Pricing Model and explains the central role of $\beta^i$ in this theory.

There is one feature of the market that is missed by risk-return diagrams, namely cost-free portfolios.
These portfolios are not uninteresting. In our case (ii) the costless portfolio $e_2$ provides one natural choice of mutual fund to use in the two mutual fund theorem. Adding multiples of this fund to your portfolio allows one to arbitrarily change the risk and return along the efficient frontier without affecting the cost. This fund is a particularly useful and easy to understand financial instrument.

Cost-free portfolios are also likely to be of great interest to rogue traders and fraudsters. They will want to know that arbitrarily large expected returns can be achieved in a Markowitz market at zero cost! Let us classify cost-free
portfolios for their benefit. We omit the proof.

\begin{theorem}
Define $\psi:V \to \R^2$ by $\psi(v)=(\sqrt{r(v,v)},p(v))$. The image of the cost-free, risk-minimizing portfolios under $\psi$ for the market ${\cal M}^n_{k,m,g,i}$ with $g>0$ is the set $(x,y) \in \R^2$ with $x\geq0$ and
\[
y = \pm g x.
\]
We call this set the {\em efficient frontier for costless portfolios}. 
The image of $\psi$ is either equal to the efficient frontier for costless portfolios or to the set of points on or to the right of the efficient frontier for costless portfolios.

There is an automorphism of the market mapping one costless portfolio to another if and only if they have the same image under $\psi$.
\end{theorem}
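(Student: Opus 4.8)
The plan is to use that $\psi$ is an invariant of the market: since $\psi(v)=(\sqrt{r(v,v)},p(v))$ depends only on $r(v,v)$ and $p(v)$, and every Markowitz morphism preserves both (the first by \Cref{lemma:polarizationIdentity}, the second by definition), the $\psi$-image of any isomorphism-invariant family of portfolios is unchanged under passing to an isomorphic market. I would therefore compute everything in the canonical representative of ${\cal M}^n_{k,m,g,i}$ supplied by \Cref{theorem:classification}. In those coordinates a portfolio is costless exactly when the coordinate paired with $c$ vanishes, the payoff $p(v)$ equals $g$ times a single coordinate $t$ (the second coordinate in case (ii), the first in case (i)), and $r(v,v)$ is the sum of squares of the first $k$ coordinates.

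For the two image statements I would argue directly in these coordinates. Among costless portfolios with a prescribed payoff, fixing $p(v)$ fixes $t$, and $r(v,v)$ is then minimised precisely by setting every remaining risk coordinate to zero; evaluating $\psi$ on the resulting portfolios gives $(|t|,gt)$ as $t$ ranges over $\R$, which is exactly the set $x\ge 0$, $y=\pm gx$. For a general costless portfolio, writing $y=gt$ and $x^2=t^2+(\text{sum of squares of the other risk coordinates})$ yields $x\ge |t|=|y|/g$, i.e.\ $|y|\le gx$; when no further risk coordinate is available this is an equality and the image is the frontier itself, and otherwise those coordinates sweep out the entire region $x\ge 0$, $|y|\le gx$ to the right of the frontier.

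The automorphism statement splits into an easy and a hard direction. Necessity is immediate: if $T$ is an automorphism with $Tv=w$ then $r(w,w)=r(v,v)$ and $p(w)=p(v)$, so $\psi(v)=\psi(w)$. For sufficiency the plan is to first describe the automorphism group cleanly using the non-degenerate form $\hat r(u,v)=r(u,v)+c(u)c(v)$ introduced in the moduli-space corollary. In a non-degenerate market $\hat r$ is a genuine inner product (its vanishing on $v$ would force $v$ to be costless and risk-free, hence valueless, hence $0$), and since a Markowitz automorphism preserves both $r$ and $c$ it preserves $\hat r$; conversely an $\hat r$-orthogonal map that fixes $c$ and $p$ preserves $r=\hat r-c\otimes c$. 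Because $c$ and $p$ are independent, their $\hat r$-duals span a $2$-plane $W$, and the automorphisms are exactly the $\hat r$-orthogonal maps that are the identity on $W$ and arbitrary orthogonal on $W^{\perp}$.

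Sufficiency then reduces to a one-line orthogonal-transformation argument. If $v,w$ are costless with $\psi(v)=\psi(w)$, then $\hat r(v,v)=r(v,v)=r(w,w)=\hat r(w,w)$, while $c(v)=c(w)=0$ and $p(v)=p(w)$ say that $v$ and $w$ have the same $\hat r$-inner products with a spanning set of $W$, so their $W$-components agree and hence their $W^{\perp}$-components have equal $\hat r$-norm. An orthogonal map of $W^{\perp}$ carrying one to the other, extended by the identity on $W$, is an automorphism sending $v$ to $w$. The step I expect to be the main obstacle is pinning down the automorphism group, and in particular recognising $\hat r$ as the correct invariant inner product that converts ``preserve $r$ and $c$'' into a single orthogonality condition. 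I would also flag the one genuine subtlety: in a degenerate market there are nonzero valueless portfolios, all with $\psi$-image $(0,0)$, and no linear automorphism can send $0$ to such a portfolio, so the equivalence holds as stated only once non-degeneracy (the standing assumption of this section) rules these out.
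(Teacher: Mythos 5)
Your proposal is correct, but there is nothing in the paper to compare it against: the paper explicitly omits the proof of this theorem, so your argument fills a real gap rather than shadowing an existing one. Your two image computations are exactly the canonical-coordinate bookkeeping the paper uses for the efficient-frontier theorem (costless means the coordinate paired with $c$ vanishes, $p$ reads off $g$ times one coordinate, risk is a sum of squares, and spare risk coordinates sweep out the region to the right), so that part is routine and sound. The genuinely valuable contribution is your treatment of the automorphism claim: the paper introduces $\hat{r}(u,v)=r(u,v)+c(u)c(v)$ only to prove continuity of $\tau^{-1}$ in the moduli-space corollary, and in the proof of \Cref{theorem:classification} the residual symmetry appears only in ad hoc phrases like ``by applying a further isometry of the space spanned by $e_2,\ldots,e_k$''. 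You promote $\hat{r}$ to its proper role: positive-definiteness (vanishing of $\hat{r}(v,v)$ forces $v$ costless and risk-free, hence --- \emph{using arbitrage-freeness}, condition (i) of non-degeneracy, a step your parenthetical silently invokes --- valueless, hence $0$) converts ``preserve $r$ and $c$'' into a single orthogonality condition, whence the automorphism group is exactly $\mathrm{id}_W\oplus O(W^\perp)$ with $W$ spanned by the $\hat{r}$-duals of $c$ and $p$; sufficiency then follows since equal $\psi$-images and zero cost force equal $W$-components and equal-norm $W^\perp$-components. This is precisely the lemma that \Cref{section:dimensionReduction} relies on implicitly when it asserts the orbits of the automorphism group are the fibres of $\phi$ and $\psi$, so your proof shores up that section too. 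Finally, your closing caveat is a genuine catch and not a technicality: as literally stated for arbitrary ${\cal M}^n_{k,m,g,i}$ with $g>0$ the biconditional fails in degenerate markets (the zero portfolio and a nonzero valueless portfolio share the $\psi$-image $(0,0)$, yet no linear automorphism exchanges them), and the frontier claim likewise fails in case (ii) with $k=1$, where $g$ is ignored and all costless portfolios have $\psi(v)=(0,0)$; both pathologies are excluded exactly by non-degeneracy, so that hypothesis must be read into the theorem, as you say.
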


\begin{remark}
Let us see how our results can be applied beyond familiar portfolio optimization. In \Cref{remark:linearSDE} we noted that we have classified ``arbitrage-free''
linear stochastic differential equations up to weak equivalence. Thus two-mutual fund theorems should be expected when studying such equations. For example, consider the financial problem of optimizing expected utility when trading stocks that follow a multivariate Bachelier model (i.e.\ a linear stochastic differential equation). One sees from our invariance arguments that any meaningful solution to this problem will be a dynamic trading strategy in just two mutual funds. We say
any ``meaningful solution'' as it is not entirely straightforward to give
a mathematically rigorous formulation of this investment problem. Our point is that however this is done, invariance under Markowitz isomorphisms should be preserved, and this will result in some form of two mutual fund theorem.
\end{remark}

\section{Dimension reduction of Markowitz markets}
\label{section:dimensionReduction}

\red{Our clssification makes it easy to identify the interesting invariant subsets of the space of portfolios.}

\begin{theorem}
 For non-degenerate markets of dimension $n$ containing no-valueless portfolios, any invariant submanifold of the market under the automorphism group has dimension less than or equal to $2$ or greater than or equal to $n-2$. If $n>4$ then the invariant submanifolds of dimension
 less than or equal $2$ are all submanifolds of the set of risk-minimizing portfolios.
 \red{Furthermore in such markets, any invariant portfolio is an element of the set of risk-minimizing portfolios.}
\label{thm:dimensionReduction}
\label{lemma:invarianceTheorem}
\end{theorem} 
\begin{proof}
Any submanifold of the market which is closed under the automorphism group
of the market must consist of orbits of the automorphism group acting on $V$.
As we have seen, excluding costless portfolios, these orbits consist of the pre-image of points of $\phi$. The pre-image of the efficient frontier has dimension less than or equal to $2$. The pre-image of any other point in the feasible set is greater than or equal to $n-2$. We use the map $\psi$ to
apply similar reasoning to the case of costless portfolios. It follows that
invariant subspaces are of dimensions $0$, $1$, $2$, $n-2$, $n-1$ or $n$. If $n>4$, $n-2>2$. So in this case all low-dimensional invariant submanifolds are in the pre-image of the efficient frontiers. This implies they lie inside the set of risk-minimizing portfolios.

The final assertion is obvious.
\end{proof}

This result can be interepreted as a significant generalization of the classical two mutual fund
theorem. However, this interpretation of our result may seem obscure if the reader does not have a background
in areas of pure mathematics such as geometry where invariance arguments are commonplace. We will therefore explain this interpretation of our result from a theoretical point of view in Section \ref{section:invariantDefinitions}. We will then give a number of concrete financial applications: in Section \ref{section:uncertainty} we show how our theorem can be applied to the question of optimization under uncertainty;
in Section \ref{section:hedging} we show how our theorem can be applied to the question of choosing optimal hedging portfolios.

\subsection{Invariant definitions}
\label{section:invariantDefinitions}

There are two commonly used notions of invariance in mathematics. One such notion is {\em invariance under a group action}: if a group acts on a set one may ask which elements of the set are left unchanged by the group action. A second notion is {\em independence of presentation} where a mathematical property of an object only depends upon the isomorphism class of an object and not on any additional details used to describe the object. In this section we will formalize the latter notion in order to see how the two notions of invariance are related.

We begin by reviewing some fundamental definitions from category theory.

\begin{definition}
A category $C$ consists of the following data:
\begin{enumerate}[(i)]
	\item a class $\ob(C)$ of {\em objects}.
	\item a class $\hom(C)$ of {\em morphisms}. To each morphism $f$
		  are associated a source $a \in \ob(C)$ and target $b \in \ob(C)$. We write $f:a \to b$. $\hom(a,b)$ is the class of all morphisms from $a$ to $b$.
	\item for all $a, b, c \in \ob{C}$ a binary operation $\hom(a,b) \times \hom(b,c) \to \hom(a,c)$ called composition. If $f:a \to b$, $g:b \to c$ we write $g \circ f$ or just $g f$ for the composition.		  
\end{enumerate}
The composition satisfies
\begin{enumerate}[(i)]
\item Associativity: If  $f:a\to b$, $g:b\to c$, $h:c\to d$
\[	f \circ (g \circ h) = (f \circ g) \circ h \]
\item Identity: For all $x \in \ob(C)$ there exists a morphism ${\mathbf 1}_x:x \to x$
with the property that if $f:a \to x$, ${\mathbf 1}_x \circ f=f$ and if $g:x \to a$, $g \circ {\mathbf 1}_x = g$.
\end{enumerate}
\end{definition}

For example, we have already defined the category of Markowitz markets whose objects consist of quadruples of Markowitz markets and whose morphisms consist of Markowitz morphisms. The underlying set associated to each market is the set of vectors. We will call this category ${\cal M}$.

Note that in this case, and indeed the other cases that will interest us, the morphisms can be interpreted as functions and the composition law is given by ordinary function composition.

Another such category is $\Set$ the category
of all ``small'' sets. We must avoid talking about the set of all sets in order to avoid Russell's paradox. To resolve this problem one chooses a sufficiently large set that will contain all the sets of interest to you and define a small set to be sets contained in this large set. The same technical device can be applied to other categories, so we will henceforth allow ourselves to talk about ``all markets'' when we should say ``all small markets''.

\begin{definition}
A functor $F$ from a category $C$ to a category $D$ is a mapping which
\begin{enumerate}[(i)]
	\item associates to each object $x \in \ob(C)$ an object in $F(x) \in \ob(D)$.
	\item associates to a morphism $f:x \to y$ in $\hom(C)$ a morphism $F(f):F(x)\to F(y)$ in $\hom(D)$.
\end{enumerate}
and which satisfies
\begin{enumerate}[(i)]
	\item For all $x \in \ob(C)$, $F({\mathbf 1}_x) = {\mathbf 1}_{F(x)}$
	\item If $f:a \to b$ and $g:b \to c$ then $F(g \circ f)=F(g) \circ F(f)$.
\end{enumerate}
\end{definition}	

An obvious example of a functor is the identity map ${\cal M}\to{\cal M}$.

Another example is the ``forgetful functor'' which maps ${\cal M}$ to the category of sets by mapping an object $(V,r,c,p)\in \ob({\cal M})$ to the set of vectors in $V$. This functor acts as the identity on the morphisms of ${\cal M}$.

As a more interesting example of a functor, consider the category ${\cal V}^{\text{iso}}$ of finite dimensional vector spaces with
morphisms given by the invertible linear transformations between these vector spaces. We may then define a functor $F$ by $F(V)=V^*$, the dual of $V$ and $F(T)=(T^{-1})^*$ for a morphism $T:V \to W$. We will denote this functor by $(\cdot)^*$.

We have now established all the concepts we
need in order to define the notion of
an ``invariantly defined element''.

\begin{definition}
Let $C$ be a category and let $F$ be a functor from $C$ to $\Set$. Then an invariantly defined
element for $F$ is a map
\[
\phi: \ob(C) \to \Set
\]
such that $\phi(c) \in F(c)$ and $\phi(f c)=F(f)\phi(c)$ (recall that in set theory the elements of sets are themselves set which is why the codomain of $\phi$ is $\Set$ even
though we think of the values of $\phi$ primarily as elements rather than as sets).

If $F$ is a functor from category $C$ to category $D$ and if $D$ is a category whose morphisms are in fact transformations of a set, we will say that $\phi$ is an invariantly defined element for $F$ if it is an invariantly defined element for $U\circ F$ where $U$ is the forgetful functor.
\end{definition}

In particular if an invariantly defined element for the identity functor on ${\cal M}$ will be a map $\phi$ from a market to a portfolio in that market. So we will call this an invariantly defined portfolio. If we think of a morphism between two markets as a relabelling of the elements of the market, we see that an invariantly defined portfolio is a way of selecting a portfolio from any market that behaves correctly under relabellings. Thus our notion of an invariantly defined element captures the idea of ``independence of presentation''.

The advantage of our category theory approach is that we can define more than just invariantly defined portfolios. For example an invariantly defined element for the functor $(\cdot)^*$ will be called an invariantly defined linear functional. It is not hard to check that $p$ and $c$ are invariantly defined linear functionals.

We are now in a position to explain the
relationship between invariance under a group action and independence of presentation.

\begin{lemma}
[Invariance Lemma]
Let $C$ be a category where every morphism is invertible. Let $F$ be a functor from $C$ to $\Set$.

For each $c \in C$ write $\Aut c$ for the set of morphisms with source and target equal to $c$. $\Aut c$ forms a group under composition. It acts on the set $F(c)$ with the action defined by
\[
f(s)=F(f)(s).
\]
for $f \in \Aut c$ and $s \in F(c)$.

If $\phi$ is an invariantly defined element for  $F$ then $\phi(c)$ is invariant under $\Aut c$.

Conversely, let $C_c$ be the subcategory consisting of objects isomorphic to $c$ and their isomorphisms and let $s \in F(c)$ be invariant under $\Aut c$. The map given by:
\begin{equation}
\phi_{c,s}(c^\prime) = F(f)(s)
\label{eq:invarianceConverse}
\end{equation}
for any $f:c\to c^\prime$ is well-defined and
gives an invariantly defined element for $F|_{C_c}$
with $\phi_{c,s}(c)=s$.
\label{lemma:invarianceLemma}
\end{lemma}
\begin{proof}
The definition of a category ensures that $\Aut c$ is a semi-group. Our assumption that every morphism in $C$ is invertible ensures that $\Aut c$ is a group.

That the action given is a group action, follows from the definition of a functor. In detail if $f:c\to c$ and $g:c \to c$ then:
\[
(fg)(s)=F(fg)(s)=F(f)G(g)(s)=f(g(s))
\]
and
\[
F(\mathbf{1}_c)(s)=\mathbf{1}^{\Set}_{F(c)}(s)=s.
\]

If $\phi$ is an invariantly defined element of
$f$ and $f \in \Aut c$ then
\[
f \phi(c) = F(f) \phi(c)=\phi(f(c))=\phi(c).
\]
Here we have used in sequence the definition of the group action, the definition of an invariantly defined element and the fact that $f:c \to c$. Thus $\phi(c)$ is invariant under the action of $\Aut c$.

By definition of $C_c$, an isomorphism $f:c \to c^\prime$ exists for any $c \in C_c$. Suppose $g:c \to c^\prime$ too. Then $g^{-1} f \in \Aut c$. We see that
\[
F(f)(s)=F(g g^{-1} f)(s)=F(g)F(g^{-1}f)(s)=F(g)(s).
\]
The first equality is immediate, we then use the functorality of $F$ and then we use the invariance of $s$ under $\Aut c$. Thus the map $\phi_{c,s}$ defined by \eqref{eq:invarianceConverse} is well-defined as claimed. Suppose $f:c \to c^\prime$ then 
$f g : g c \to g c^\prime$ so
\[
\phi_{c,s}(g c^\prime)=F(gf)(s)=F(g)F(f)(s)=F(g)\phi_{c,s}.
\]
So $\phi_{c,s}$ is an invariantly defined element as claimed.

Finally note that
\[ \phi_{c,s}(c)=F({\mathbf 1}_c)(s)={\mathbf 1}^{\Set}_{F(c)} s = s \]
as claimed.
\end{proof}

A consequence of our invariance Lemma \ref{lemma:invarianceLemma} when combined with Theorem \ref{thm:dimensionReduction} is that
any invariantly defined portfolio must lie
in the given two dimensional space. Since we believe that any financially interesting statement must be independent of the labelling of stocks and mutual funds in a Markowitz market, this implies that the portfolios that can be identified uniquely by some financially interesting question all lie in a two dimensional space. This gives us our claimed generalization of the two mutual fund theorem, stated in the precise language of invariantly defined elements.

However, this does seem at first to open a new
problem, how can we tell if a given $\phi$ is invariantly defined? For example, if we fix constants $C$ and $P$, is $\phi_{C,P}$ given by
\begin{equation}
\phi_{C,P}((V,r,c,p)) = \underset{v \in V,\, c(v)=C,\, p(v)=P}{\argmin} r(v,v)
\label{eq:exampleInvariant}
\end{equation}
invariantly defined? We would certainly expect that it is, as this is surely a financially meaningful problem. But how can we prove this without a tedious calculation?

To resolve this problem we note that we can mirror most of the basic constructions of set theory using functors. We will restrict our attention to the case when every morphism in our category $C$ is invertible.

For example given two functors $F:C \to D_1$ and $G:C \to D_2$ we can define a product category $D_1 \times D_2$ in the obvious way.
This allows us to define the notion of an invariantly defined pair of elements.

Similarly if $F:C \to \Set$ is a functor, since $F(f)$ is permutation of $f(c)$ we may define an action of $F(f)$ on the power set ${\cal P}(f(c))$. Hence we can define a power-set functor ${\cal P}F$. This allows us to talk about invariantly defined sets of elements.

Since a function can be defined as a subset of a Cartesian product satisfying certain properties, we see that we can also talk about invariantly defined functions.

It is instructive to compute how we define a functor acting on functions in a little detail.
Let $F_S:C \to S$ and $F_T:C \to T$ be two functors to categories $S$ and $T$ which are backed by sets. Write $U$ for each of the forgetful functors to $\Set$. We wish to define a functor called $\fun(F_S,F_T)$ derived from $F_S$ and $F_T$. It will act on objects $c \in \ob{C}$ by
\[
\fun(F_S,F_T)(c)=\{ \psi:US(c)\to UT(c) \}
\]
Given $\psi:US(c)\to UT(c)$, we can view $\psi$ as a function in which case we write $\psi(x)$ in the usual way. We may also view $\psi$ as a set in which case we have $(x,\psi(x)) \in \psi$.
The recipe above tells us how we should define
the action of $\fun(F_S,F_T)$ on morphisms $f:c\to c^\prime$. The quantity $(\fun(F_S,F_T)f)\psi$ should be a new function which we can write explicitly as a subset of the Cartesian product:
\begin{align*}
(\fun(F_S,F_T)f)\psi&=\{ (F(f)s,F(f)t):(s,t) \in \psi \}
\end{align*}
We now translate this definition into conventional function notation.
\begin{align*}
z &= (\fun(F_S,F_T)(f)\psi)(s) \\
\iff \quad 
(s,z) &\in \{ (F_S(f)s,F_T(f)t):(s,t) \in \psi \} \\
\iff \quad 
(s,z) &\in F_T(f) \psi F_S(f)^{-1} \\
\iff \quad
z &= (F_T(f) \psi F_S(f^{-1}))(s)
\end{align*}
So in conventional function notation
\begin{align*}
(\fun(F_S,F_T)(f)\psi)(s) = (F_T(f) \psi F_S(f^{-1}))(s).
\end{align*}

Note that our definition of the dual space functor $(\cdot)^*$ which we defined earlier is simply a special case. Let us write  ${\mathbf 1}$ for the identity functor on vector spaces. Let us write $F_\R$ for the trivial functor which maps all vector spaces to $\R$ and all morphisms to the identity. We see that $(\cdot)^*=\fun({\mathbf 1}, \R)$.

In summary, we have shown that our definition of an invariantly defined element encompasses
many of the basic notions of set theory. In particular we have shown how the notion of an invariantly defined function follows directly from the set-theoretic definition of a function.

It is easy to check that all the properties one might expect of invariantly defined sets hold. For example, the union, intersection, product and power set of invariantly defined sets are all invariantly defined. It follows from such basic set theoretic facts as this and the definition of a function as a set that the composition of invariant functions is invariant, the image of an invariant set by an invariant function is invariant and so forth.

There is one set theoretic construction, however, that is not necessarily invariantly defined. This is the act of making a choice. For example, if we simply choose a portfolio in every Markowitz market there is no reason to expect this to be invariantly defined.

We conclude that any mathematical operation applied to invariantly defined inputs will
result in an invariantly defined output unless that operation involves making an arbitrary choice. This is a consequence of the fact that mathematics can be modelled using set theory.

As a concrete example, we see that $\phi_{C,P}$ defined in
\eqref{eq:exampleInvariant} is an invariantly defined portfolio as claimed. This is a consequence of the fact that all the inputs are invariantly defined. For example we have already remarked that $c$ and $p$ are invariantly defined. Indeed this is an immediate consequence of \ref{lemma:invarianceLemma}, as is the fact that $r$ is invariantly defined. The ordering $<$ defined on $\R$ that is used by $\argmin$ is also invariantly defined simply because the functor we are using to $\R$ is trivial. For the same reason $C$ and $P$ are invariantly defined.

In short, very often quantities are manifestly invariantly defined because their definition does not involve choices.

Having said that, sometimes a quantity is invariantly defined without it being immediately obvious.

For example, consider the measure $\mu_r$
on a Markowitz market $(V,r,c,p)$ defined as follows: First choose an $r$-orthonormal basis and hence define an inner product space isomorphism from $\psi:\R^n\to V$; define the measure of a subset of $V$ to be the Lebesgue measure of $\psi^{-1}(V)$. This definition apparently depends upon the choice of the orthonormal basis and so is not manifestly invariantly defined. However, the determinant of an orthogonal transformation is always $\pm 1$ and so we see that this measure is in fact defined independently of the choice of basis. We have called the measure $\mu_r$ as it only depends upon $r$.

Once we have established that a quantity is invariantly defined, we may use it to define
other invariantly defined quantities. For example we may define the standard Gaussian measure on $(V,r,c,p)$ by
\[
\frac{1}{(2 \pi)^\frac{n}{2}} e^{-\frac{1}{2}r(v,v)} \mu_r
\]
here $n$ is the dimension of the vector space $V$ which is invariantly defined by undergraduate linear algebra. We conclude that the standard Gaussian measure is invariantly defined.
We will use the measure $\mu_r$ and the standard Gaussian measure to define other more complex
invariant objects in Section \ref{section:uncertainty} below.

\begin{remark}
If the reader is already familiar with category theory, they may wonder whether invariantly defined elements can be interpreted as natural transformations (see \cite{eilenbergmaclane} for a definition of a natural transformations). To see how this can be done, let $\phi$ be an invariantly defined element for a functor $F:C\to \Set$. Let $Z$
be the functor mapping every object in $C$ to $\{0\}$ and every morphism in $C$ to the identity. For each $c \in \ob(C)$, define a function $\eta_\phi(c):\{0\}\to \Set$
by $\eta_\phi(c)(0)=\phi(c)$. Then $\eta_\phi$ is a natural transformation from $Z$ to $F$.
\end{remark}

\subsection{Optimization under uncertainty}
\label{section:uncertainty}

We will now show how the the theory of Section \ref{section:invariantDefinitions} can be applied to give a concrete financial application of Theorem \ref{thm:dimensionReduction}.

It has been observed that the portfolios identified by Markowitz's theory are often badly behaved in practice. For example in \cite{blackLitterman}, Black and Litterman observe that these ``almost always ordain large short positions in many assets'' and they cite \cite{greenHollifield} and \cite{bestGrauer} as academic references on the types of problems that are experienced.

One source of these problems
with Markowitz's theory is the difficulty of estimating expected returns.  One approach
to selecting the expected return vector is to use
expert knowledge, but in the absence of this
specialist knowledge one might estimate expected returns using historical returns. We will refer to the Markowitz market obtained from the historic mean and covariance combined with current prices as the historic Markowitz model.
However, as discussed in the references above, it has been found that the historic Markowitz model performs poorly in practice.

One tempting approach to resolving this problem is to consider model uncertainty. Any statistical measure of the historic returns will have some uncertainty and this should be incorporated into the optimization problem. Both the expected returns and the covariance matrix of returns will be difficult to estimate from historic data. There are many approaches to optimization under uncertainty
and many of these have been applied to this investment problem.
For example, in \cite{jorion} a Bayesian
approach is used, in \cite{ceriaStubbs} a robust optimization approach is followed, and \cite{garlappiUppalWang}
uses an approach based on the multi-prior model of decision making.

For all of these approaches one must make some
additional modelling decisions, but in each case
there is a natural choice of how to do this based on the data of the historic Markowitz model. 

Let us give a concrete example. In a robust optimization approach one needs to choose a set, ${\cal P}$, of possible probability distributions for asset payoffs. One might decide to choose as ${\cal P}$ the set of Gaussian distributions which are within a certain Hellinger distance, $d$ of the standard Gaussian measure arising in the historic Markowitz model
(see \cite{ayJost} for a definition of the Hellinger metric). One can think of the Hellinger distance as a measure of the statistical dissimilarity of two distributions.

Let us write $r_\P$ for covariance form defined by a probability distribution $\P \in {\cal P}$ and $p_\P$ for the expected payoff associated with $\P$. A typical robust optimization problem would be to find
\begin{equation}
\underset{v, c(v)=C}{\argmin} \left( \underset{\P \in {\cal P}}{\max} (r_\P(v,v) - \lambda p_\P(v)) \right)
\label{eq:robustProblem}
\end{equation}
for a chosen value of a risk-aversion parameter $\lambda$ and portfolio cost $C$.

Despite the complexity of this set-up, we see that the problem is invariant under Markowitz isomorphisms. The key step is to note that ${\cal P}$ is invariantly defined. To see this first recall that the Hellinger metric is invariantly defined on the space of measures on a finite dimensional real vector space, even if we forget the extra structure of $r$, $c$ and $p$. The Gaussian measure is invariantly defined. The set of measures which are Gaussian can be invariantly defined using only the reference measure $\mu_r$. Thus ${\cal P}$ is invariantly defined and hence the set defined by \eqref{eq:robustProblem} is also invariantly defined.

Hence if this problem does have a unique solution, that solution must be a weighted sum of the portfolios identified by the two mutual fund theorem.

We need not restrict ourselves to using the Hellinger metric to find invariantly defined sets like ${\cal P}$. There are many metrics and divergences defined on the space of distributions such as the $L^p$ metrics, the Wasserstein metric and the Kullback--Leibler divergence (again see \cite{ayJost} for the necessary definitions). All of these are invariantly defined using only the structure $\mu_r$. Thus we may repeat our analysis using any of these methods of defining ${\cal P}$ and we will obtain the same result.

Similarly, as predicted by our theory the invariant multi-prior problem described in Section 2.2.3 of \cite{garlappiUppalWang}
and the invariant Bayesian problem described in Section 3.3 of \cite{garlappiUppalWang} also
identify linear combinations of the portfolios coming from the two mutual fund theorem.

These examples illustrate the general principle implied by Theorem \ref{thm:dimensionReduction}
that the observed problems with the historic Markowitz model cannot be fixed by simply using more advanced optimization concepts. One also requires extra data. Indeed the approaches of \cite{blackLitterman}, \cite{garlappiUppalWang}, \cite{jorion} and \cite{ceriaStubbs} all suggest
additional data that could be incorporated into
the optimization problem in order to identify
alternative portfolios.

\subsection{Optimal hedging}
\label{section:hedging}

We give a second financial application of Theorem \ref{thm:dimensionReduction}.

Suppose that a fund manager has already created two 
investment funds according to the two mutual fund theorem targeting investors who currently have no liabilities. However, an investor with existing liabilities will have different risk preferences as they may be able to take advantage of hedging opportunities in the market. To attract such investors, the fund manager wishes to create one additional fund which can be used for hedging. Due to the overheads of fund management, the fund manager only wishes to create one additional fund. They ask what would be the optimal choice of hedging fund?

In lieu of any data on the existing liabilities of potential investors, they assume that the potential investors have been investing in the stock market previously to build up their liability and were using an optimal investment strategy based on their own estimates the payoff functional $p$. Thus they speculate that the potential investors will have liabilities that are normally distributed around one of the risk-minimizing portfolios found in Theorem \ref{thm:twoMutualFund} with covariance given by the bilinear form $r$. As we saw in the previous example, there are many notions of optimality one could now use to define an optimal hedging fund. However, as before any reasonable definition of 
an optimal hedging fund will be invariant under Markowitz morphisms.

Without loss of generality, the fund manager can also ensure that the fund is independent of their existing funds and is scaled so that purchasing $1$ unit of the fund has a cost of $1$.  All funds satisfying these last two properties are isomorphic under Markowitz morphisms. Hence whatever notion of optimality the fund manager decides to employ, if it is Markowitz invariant it will fail to identify any optimal hedging fund. Hence it is impossible to identify such an optimal hedging fund without supplying more data.

\section*{Acknowledgements}

\noindent This paper emerged from discussions with Teemu Pennanen and Matthew Glover. It has also benefited from the comments and suggestions of Damiano Brigo and Umut Cetin, and from discussions with James Newton and Ashwin Iyengar.

\appendix

\section{Relationship with the matrix formulation of portfolio optimization}
\label{appendix}

For the reader's convenience we describe
in detail how to translate between standard
presentations of Markowitz's theory and our
account. We will use boldface to indicate vectors in $\R^n$ and the standard font weight to represent abstract vectors.

Associated to a vector $v \in V$ we have its concrete realisation $\bs v \in \R^n$. The $i$-th component  $v_i$ of $\bs v$ indicates the quantity of asset $i$ that is held in the portfolio $v$. We will use boldface $\bs p$ and $\bs c$ for the row vectors defined by requiring $\bs p \, \bs v = p(v)$ and $\bs c \, \bs v = c(v)$ respectively. We similarly write boldface $\bs r$ for the symmetric matrix defined by requiring $r(u,v)=u^\top \bs r \, v$.

Let $\bs \Lambda$ be the diagonal matrix with $(i,i)$-th entry given by
\[
\bs \Lambda_{i,i} = \frac{1}{c_i}
\]
where $c_i$ is $i$-th component of $\bs c$.
We define the porfolio weights, $\bs w$, of a portfolio of non-zero cost by
\[
\bs w(v) = \frac{1}{\bs c \, \bs v} \Lambda^{-1} \bs v.
\]

The sum of the components of $\bs w(v)$ is then always equal to $1$. We may write this condition as $\bs 1 \bs w=1$ where boldface $\bs 1$ is the row vector consisting of $n$ ones.

From \eqref{eq:er}, the expected return $\ER$ can be computed from $\bs w$ and satisfies
\begin{equation}
\ER(v)= (\bs p \bs \Lambda - \bs 1 )\bs w
= {\bs \mu^\top} \bs w
\label{eq:pAndMu}
\end{equation}
where $\bs \mu=(\bs p \bs \Lambda - \bs 1 )^\top$ is the vector whose $i$-th component is the expected return of asset $i$.
The relative risk $\RR$ similarly satisfies
\begin{equation}
\RR(v)=\sqrt{\bs w^T \bs \Lambda^\top \bs r \bs \Lambda \bs w}= \sqrt{{\bs w^T} \bs \Sigma {\bs w}}
\label{eq:rAndSigma}
\end{equation}
where $\bs \Sigma=\bs \Lambda^\top \bs r \bs \Lambda$ is the covariance matrix of returns.

Given a non-zero initial cost $C$ we can use the mapping $v \to \bs w$ to
translate between the classical Markowitz optimization problem
\[
\begin{array}{lcl}
\underset{\bs w \in \R^n}{\text{minimize}} & & {\bs w}^\top \bs \Sigma \bs w \\
\text{subject to} & & \bs \mu^T \bs w = R \\
\text{and} & & \bs 1 \bs w=1 \\
\end{array}
\]
and the problem 
\[
\begin{array}{lcl}
\underset{v \in V}{\text{minimize}} & & r(v,v) \\
\text{subject to} & & p(v) = (R+1)C \\
\text{and} & & c(v) = C \\
\end{array}
\]
which we solved in Theorem \ref{thm:twoMutualFund}. In the classical Markowitz problem
with no risk-free asset, one assumes that $\Sigma$, and hence $r$, is positive definite. This puts us in case (ii) of Theorem \ref{theorem:classification} with $k=n$.

To show how our approach compares to the classical
approach of Lagrange multipliers
we now give a numerical example of the computation
of the set of risk-minimizing portfolios
and the isomorphism class following a geometric approach. This
can be compared with \cite{zivot}
which performs similar calculations numerically using the Lagrange multiplier approach.

Following \cite{zivot}, we now suppose that we are given numeric values for the vector of expected returns on each asset and the associated covariance matrix as follows:
\[
\bs \mu = \left(
\begin{array}{c}
 0.0427 \\
 0.0115 \\
 0.0285 \\
\end{array}
\right), \quad 
\bs \Sigma = \left(
\begin{array}{ccc}
 0.0100 & 0.0018 & 0.0011 \\
 0.0018 & 0.0109 & 0.0026 \\
 0.0011 & 0.0026 & 0.0199 \\
\end{array}
\right).
\]
In the classical formulation of the Markowitz problem used in \cite{zivot}, the cost vector $\bs c$ is not specified. So we are free to assume that the price of the assets are scaled such that the price of one unit of the asset is equal to $1$. This implies
that $\bs c=\bs 1$ and so $\bs \Lambda$ is the identity matrix. Thus from \eqref{eq:pAndMu} we must take $\bs p = \bs \mu^\top + \bs 1$
and from \eqref{eq:rAndSigma}, $\bs r = \bs \Sigma$.

We begin by identifying the duals $\bs c^*$ and $\bs p^*$ of $c$ and $p$ with respect to $r$. The dual, $f^* \in V$ of a functional $f \in V^*$ with respect to $r$ is defined by the requirement
\[
f(v) = r(f^*,v) \quad \forall v.
\]
Hence if ${\bf f}$ denotes the row vector associated with $f$ we have
\[
\bs f \bs v = (\bs f^*)^\top \bs \Sigma \bs v.
\]
Hence the dual of $\bs f$ satisfies
\[
 \bs f^* = \bs \Sigma^{-1} \bs f^\top.
\]
For our concrete example we compute
that
\[
\bs c^*=\left(
\begin{array}{c}
 83.5148 \\
 69.2237 \\
 36.5906 \\
\end{array}
\right)\quad \bs p^*=\left(
\begin{array}{c}
 87.6374 \\
 69.3134 \\
 37.7831 \\
\end{array}
\right).
\]
We note that it follows immediately from our classification theorem that the space of risk-minimizing portfolios (see Definition \ref{defn:riskMinimizing}) is spanned by these two vectors. We can also identify the portfolio weights that minimize risk irrespective of the payoff. They are given
by
\[
{\bf w}=\frac{\bs c^*}{\bs c \bs c^*}=\left(
\begin{array}{c}
 0.441109 \\
 0.365626 \\
 0.193264 \\
\end{array}
\right).
\]
This matches the value obtained using Lagrange multipliers in \cite{zivot}.

Applying the Gram-Schmidt process to the basis $\{\bs c^*, \bs p^*,(1,0,0)^\top\}$ we obtain
the $\bs r$-orthonormal vectors
\[
\bs e_1 = 
\left(
\begin{array}{c}
 6.06953 \\
 5.0309 \\
 2.65926 \\
\end{array}
\right), \quad \bs e_2 =
\left(
\begin{array}{c}
 7.29732 \\
 -7.91836 \\
 0.621044 \\
\end{array}
\right), \quad \bs e_3 =
\left(
\begin{array}{c}
3.63619 \\ 3.03729 \\ -6.67348
\end{array}
\right).
\]
This completely determines an isomorphism of the form given in Theorem \ref{theorem:classification}.
In practice one would only apply the Gram-Schmidt process to the pair of vectors ${\bs c^*, \bs p^*}$ as that is sufficient to identify the vectors $\bs e_1$ and $\bs e_2$, and hence the isomorphism class of the market. To identify the isomorphism class we simply solve the equations
\begin{align*}
m {\bs c^*} &= \bs e_1 \\
{\bs p^*} &= \frac{i}{m} \bs e_1 + g \bs e_2.
\end{align*}
In this case we find
\[
m = 0.0727, \quad g=-0.2382, \quad i=1.0286.
\]
As one would expect these values match the ones that can be read off from the plot of the efficient frontier in Figure 1.3 of \cite{zivot} using our own Figure \ref{fig:efficientfrontier}.

\bibliography{markowitz}

\begin{thebibliography}{}

\bibitem [\protect \citeauthoryear {%
Ay%
, Jost%
, V{\^a}n~L{\^e}%
\BCBL {}\ \BBA {} Schwachh{\"o}fer%
}{%
Ay%
\ \protect \BOthers {.}}{%
{\protect \APACyear {2015}}%
}]{%
ayJost}
\APACinsertmetastar {%
ayJost}%
\begin{APACrefauthors}%
Ay, N.%
, Jost, J.%
, V{\^a}n~L{\^e}, H.%
\BCBL {}\ \BBA {} Schwachh{\"o}fer, L.%
\end{APACrefauthors}%
\unskip\
\newblock
\APACrefYearMonthDay{2015}{}{}.
\newblock
{\BBOQ}\APACrefatitle {Information geometry and sufficient statistics}
  {Information geometry and sufficient statistics}.{\BBCQ}
\newblock
\APACjournalVolNumPages{Probability Theory and Related
  Fields}{162}{1-2}{327--364}.
\PrintBackRefs{\CurrentBib}

\bibitem [\protect \citeauthoryear {%
Best%
\ \BBA {} Grauer%
}{%
Best%
\ \BBA {} Grauer%
}{%
{\protect \APACyear {1991}}%
}]{%
bestGrauer}
\APACinsertmetastar {%
bestGrauer}%
\begin{APACrefauthors}%
Best, M\BPBI J.%
\BCBT {}\ \BBA {} Grauer, R\BPBI R.%
\end{APACrefauthors}%
\unskip\
\newblock
\APACrefYearMonthDay{1991}{}{}.
\newblock
{\BBOQ}\APACrefatitle {On the sensitivity of mean-variance-efficient portfolios
  to changes in asset means: some analytical and computational results} {On the
  sensitivity of mean-variance-efficient portfolios to changes in asset means:
  some analytical and computational results}.{\BBCQ}
\newblock
\APACjournalVolNumPages{The review of financial studies}{4}{2}{315--342}.
\PrintBackRefs{\CurrentBib}

\bibitem [\protect \citeauthoryear {%
Black%
\ \BBA {} Litterman%
}{%
Black%
\ \BBA {} Litterman%
}{%
{\protect \APACyear {1992}}%
}]{%
blackLitterman}
\APACinsertmetastar {%
blackLitterman}%
\begin{APACrefauthors}%
Black, F.%
\BCBT {}\ \BBA {} Litterman, R.%
\end{APACrefauthors}%
\unskip\
\newblock
\APACrefYearMonthDay{1992}{}{}.
\newblock
{\BBOQ}\APACrefatitle {Global portfolio optimization} {Global portfolio
  optimization}.{\BBCQ}
\newblock
\APACjournalVolNumPages{Financial Analysts Journal}{48}{5}{28--43}.
\PrintBackRefs{\CurrentBib}

\bibitem [\protect \citeauthoryear {%
Ceria%
\ \BBA {} Stubbs%
}{%
Ceria%
\ \BBA {} Stubbs%
}{%
{\protect \APACyear {2006}}%
}]{%
ceriaStubbs}
\APACinsertmetastar {%
ceriaStubbs}%
\begin{APACrefauthors}%
Ceria, S.%
\BCBT {}\ \BBA {} Stubbs, R\BPBI A.%
\end{APACrefauthors}%
\unskip\
\newblock
\APACrefYearMonthDay{2006}{}{}.
\newblock
{\BBOQ}\APACrefatitle {Incorporating estimation errors into portfolio
  selection: Robust portfolio construction} {Incorporating estimation errors
  into portfolio selection: Robust portfolio construction}.{\BBCQ}
\newblock
\APACjournalVolNumPages{Journal of Asset Management}{7}{2}{109--127}.
\PrintBackRefs{\CurrentBib}

\bibitem [\protect \citeauthoryear {%
Eilenberg%
\ \BBA {} MacLane%
}{%
Eilenberg%
\ \BBA {} MacLane%
}{%
{\protect \APACyear {1945}}%
}]{%
eilenbergmaclane}
\APACinsertmetastar {%
eilenbergmaclane}%
\begin{APACrefauthors}%
Eilenberg, S.%
\BCBT {}\ \BBA {} MacLane, S.%
\end{APACrefauthors}%
\unskip\
\newblock
\APACrefYearMonthDay{1945}{}{}.
\newblock
{\BBOQ}\APACrefatitle {General theory of natural equivalences} {General theory
  of natural equivalences}.{\BBCQ}
\newblock
\APACjournalVolNumPages{Transactions of the American Mathematical
  Society}{58}{2}{231--294}.
\PrintBackRefs{\CurrentBib}

\bibitem [\protect \citeauthoryear {%
Garlappi%
, Uppal%
\BCBL {}\ \BBA {} Wang%
}{%
Garlappi%
\ \protect \BOthers {.}}{%
{\protect \APACyear {2006}}%
}]{%
garlappiUppalWang}
\APACinsertmetastar {%
garlappiUppalWang}%
\begin{APACrefauthors}%
Garlappi, L.%
, Uppal, R.%
\BCBL {}\ \BBA {} Wang, T.%
\end{APACrefauthors}%
\unskip\
\newblock
\APACrefYearMonthDay{2006}{}{}.
\newblock
{\BBOQ}\APACrefatitle {Portfolio selection with parameter and model
  uncertainty: A multi-prior approach} {Portfolio selection with parameter and
  model uncertainty: A multi-prior approach}.{\BBCQ}
\newblock
\APACjournalVolNumPages{The Review of Financial Studies}{20}{1}{41--81}.
\PrintBackRefs{\CurrentBib}

\bibitem [\protect \citeauthoryear {%
Green%
\ \BBA {} Hollifield%
}{%
Green%
\ \BBA {} Hollifield%
}{%
{\protect \APACyear {1992}}%
}]{%
greenHollifield}
\APACinsertmetastar {%
greenHollifield}%
\begin{APACrefauthors}%
Green, R\BPBI C.%
\BCBT {}\ \BBA {} Hollifield, B.%
\end{APACrefauthors}%
\unskip\
\newblock
\APACrefYearMonthDay{1992}{}{}.
\newblock
{\BBOQ}\APACrefatitle {When Will Mean-Variance Efficient Portfolios Be Well
  Diversified?} {When will mean-variance efficient portfolios be well
  diversified?}{\BBCQ}
\newblock
\APACjournalVolNumPages{The Journal of Finance}{47}{5}{1785--1809}.
\PrintBackRefs{\CurrentBib}

\bibitem [\protect \citeauthoryear {%
Jensen%
, Black%
\BCBL {}\ \BBA {} Scholes%
}{%
Jensen%
\ \protect \BOthers {.}}{%
{\protect \APACyear {1972}}%
}]{%
jensen}
\APACinsertmetastar {%
jensen}%
\begin{APACrefauthors}%
Jensen, M\BPBI C.%
, Black, F.%
\BCBL {}\ \BBA {} Scholes, M\BPBI S.%
\end{APACrefauthors}%
\unskip\
\newblock
\APACrefYearMonthDay{1972}{}{}.
\newblock
{\BBOQ}\APACrefatitle {The capital asset pricing model: Some empirical tests}
  {The capital asset pricing model: Some empirical tests}.{\BBCQ}
\newblock

\PrintBackRefs{\CurrentBib}

\bibitem [\protect \citeauthoryear {%
Jorion%
}{%
Jorion%
}{%
{\protect \APACyear {1986}}%
}]{%
jorion}
\APACinsertmetastar {%
jorion}%
\begin{APACrefauthors}%
Jorion, P.%
\end{APACrefauthors}%
\unskip\
\newblock
\APACrefYearMonthDay{1986}{}{}.
\newblock
{\BBOQ}\APACrefatitle {Bayes-Stein estimation for portfolio analysis}
  {Bayes-stein estimation for portfolio analysis}.{\BBCQ}
\newblock
\APACjournalVolNumPages{Journal of Financial and Quantitative
  Analysis}{21}{3}{279--292}.
\PrintBackRefs{\CurrentBib}

\bibitem [\protect \citeauthoryear {%
Lintner%
}{%
Lintner%
}{%
{\protect \APACyear {1965}}%
}]{%
lintner}
\APACinsertmetastar {%
lintner}%
\begin{APACrefauthors}%
Lintner, J.%
\end{APACrefauthors}%
\unskip\
\newblock
\APACrefYearMonthDay{1965}{}{}.
\newblock
{\BBOQ}\APACrefatitle {The valuation of risk assets and the selection of risky
  investments in stock portfolios and capital budgets} {The valuation of risk
  assets and the selection of risky investments in stock portfolios and capital
  budgets}.{\BBCQ}
\newblock
\APACjournalVolNumPages{The review of economics and statistics}{}{}{13--37}.
\PrintBackRefs{\CurrentBib}

\bibitem [\protect \citeauthoryear {%
Markowitz%
}{%
Markowitz%
}{%
{\protect \APACyear {1952}}%
}]{%
markowitz}
\APACinsertmetastar {%
markowitz}%
\begin{APACrefauthors}%
Markowitz, H.%
\end{APACrefauthors}%
\unskip\
\newblock
\APACrefYearMonthDay{1952}{}{}.
\newblock
{\BBOQ}\APACrefatitle {Portfolio selection} {Portfolio selection}.{\BBCQ}
\newblock
\APACjournalVolNumPages{The Journal of Finance}{7}{1}{77--91}.
\PrintBackRefs{\CurrentBib}

\bibitem [\protect \citeauthoryear {%
Merton%
}{%
Merton%
}{%
{\protect \APACyear {1972}}%
}]{%
merton}
\APACinsertmetastar {%
merton}%
\begin{APACrefauthors}%
Merton, R\BPBI C.%
\end{APACrefauthors}%
\unskip\
\newblock
\APACrefYearMonthDay{1972}{}{}.
\newblock
{\BBOQ}\APACrefatitle {An analytic derivation of the efficient portfolio
  frontier} {An analytic derivation of the efficient portfolio
  frontier}.{\BBCQ}
\newblock
\APACjournalVolNumPages{Journal of Financial and Quantitative
  Analysis}{7}{04}{1851--1872}.
\PrintBackRefs{\CurrentBib}

\bibitem [\protect \citeauthoryear {%
Mossin%
}{%
Mossin%
}{%
{\protect \APACyear {1966}}%
}]{%
mossin}
\APACinsertmetastar {%
mossin}%
\begin{APACrefauthors}%
Mossin, J.%
\end{APACrefauthors}%
\unskip\
\newblock
\APACrefYearMonthDay{1966}{}{}.
\newblock
{\BBOQ}\APACrefatitle {Equilibrium in a capital asset market} {Equilibrium in a
  capital asset market}.{\BBCQ}
\newblock
\APACjournalVolNumPages{Econometrica: Journal of the econometric
  society}{}{}{768--783}.
\PrintBackRefs{\CurrentBib}

\bibitem [\protect \citeauthoryear {%
Sharpe%
}{%
Sharpe%
}{%
{\protect \APACyear {1964}}%
}]{%
sharpeCAPM}
\APACinsertmetastar {%
sharpeCAPM}%
\begin{APACrefauthors}%
Sharpe, W\BPBI F.%
\end{APACrefauthors}%
\unskip\
\newblock
\APACrefYearMonthDay{1964}{}{}.
\newblock
{\BBOQ}\APACrefatitle {Capital asset prices: A theory of market equilibrium
  under conditions of risk} {Capital asset prices: A theory of market
  equilibrium under conditions of risk}.{\BBCQ}
\newblock
\APACjournalVolNumPages{The journal of finance}{19}{3}{425--442}.
\PrintBackRefs{\CurrentBib}

\bibitem [\protect \citeauthoryear {%
Sharpe%
\ \BBA {} Tint%
}{%
Sharpe%
\ \BBA {} Tint%
}{%
{\protect \APACyear {1990}}%
}]{%
sharpe}
\APACinsertmetastar {%
sharpe}%
\begin{APACrefauthors}%
Sharpe, W\BPBI F.%
\BCBT {}\ \BBA {} Tint, L\BPBI G.%
\end{APACrefauthors}%
\unskip\
\newblock
\APACrefYearMonthDay{1990}{}{}.
\newblock
{\BBOQ}\APACrefatitle {Liabilities-a new approach} {Liabilities-a new
  approach}.{\BBCQ}
\newblock
\APACjournalVolNumPages{The journal of portfolio management}{16}{2}{5--10}.
\PrintBackRefs{\CurrentBib}

\bibitem [\protect \citeauthoryear {%
Treynor%
}{%
Treynor%
}{%
{\protect \APACyear {1961}}%
}]{%
treynor}
\APACinsertmetastar {%
treynor}%
\begin{APACrefauthors}%
Treynor, J\BPBI L.%
\end{APACrefauthors}%
\unskip\
\newblock
\APACrefYearMonthDay{1961}{}{}.
\newblock
{\BBOQ}\APACrefatitle {Toward a theory of market value of risky assets} {Toward
  a theory of market value of risky assets}.{\BBCQ}
\newblock
\APACjournalVolNumPages{Unpublished manuscript}{6}{}{}.
\PrintBackRefs{\CurrentBib}

\bibitem [\protect \citeauthoryear {%
Zivot%
}{%
Zivot%
}{%
{\protect \APACyear {2013}}%
}]{%
zivot}
\APACinsertmetastar {%
zivot}%
\begin{APACrefauthors}%
Zivot, E.%
\end{APACrefauthors}%
\unskip\
\newblock
\APACrefYearMonthDay{2013}{August}{}.
\newblock
\APACrefbtitle {{P}ortfolio {T}heory with {M}atrix {A}lgebra. {In} the lecture
  notes ``{C}omputational Finance and Financial Econometrics''
  \url{https://faculty.washington.edu/ezivot/econ424/portfolioTheoryMatrix.pdf}.}
  {{P}ortfolio {T}heory with {M}atrix {A}lgebra. {In} the lecture notes
  ``{C}omputational finance and financial econometrics''
  \url{https://faculty.washington.edu/ezivot/econ424/portfolioTheoryMatrix.pdf}.}
\PrintBackRefs{\CurrentBib}

\end{thebibliography}
\bibliographystyle{apacite}

\end{document}